\DeclarePairedDelimiter{\ceil}{\lceil}{\rceil}
\DeclareMathOperator*{\argmax}{arg\,max}
\DeclareMathOperator*{\argmin}{arg\,min}
\newcommand{\old}[1]{{}}
\title{Optimizing Visibility-Based Search in Polygonal Domains} 
\author{Kien C. Huynh}{Linköping University, Sweden\footnote{Work done at Stony Brook University}}{kchuynh@cs.stonybrook.edu}{https://orcid.org/0000-0001-6247-8964}{}
\author{Joseph S. B. Mitchell}{Stony Brook University, USA}{joseph.mitchell@stonybrook.edu}{https://orcid.org/0000-0002-0152-2279}{}
\author{Linh Nguyen\footnote{Corresponding author}}{Stony Brook University, USA}{linh.nguyen.1@stonybrook.edu}{https://orcid.org/0009-0009-3518-929X}{}
\author{Valentin Polishchuk}{Linköping University, Sweden}{valentin.polishchuk@liu.se}{https://orcid.org/0000-0002-8292-2281}{}
\authorrunning{Kien C. Huynh, Joseph S. B. Mitchell, Linh Nguyen, Valentin Polishchuk} 
\keywords{Quota watchman route problem, budgeted watchman route problem, visibility-based search, approximation} 
\begin{document}

\maketitle

\begin{abstract}
Given a geometric domain $P$, visibility-based search problems seek routes for one or more mobile agents (``watchmen'') to move within $P$ in order to be able to see a portion (or all) of $P$, while optimizing objectives, such as the length(s) of the route(s), the size (e.g., area or volume) of the portion seen, the probability of detecting a target distributed within $P$ according to a prior distribution, etc. The classic watchman route problem seeks a shortest route for an observer, with omnidirectional vision, to see all of $P$. In this paper we study bicriteria optimization problems for a single mobile agent within a polygonal domain $P$ in the plane, with the criteria of route length and area seen. Specifically, we address the problem of computing a minimum length route that sees at least a specified area of $P$ (minimum length, for a given area quota). We also study the problem of computing a length-constrained route that sees as much area as possible. We provide hardness results and approximation algorithms. In particular, for a simple polygon $P$ we provide the first fully polynomial-time approximation scheme for the problem of computing a shortest route seeing an area quota, as well as a (slightly more efficient) polynomial dual approximation. We also consider polygonal domains $P$ (with holes) and the special case of a planar domain consisting of a union of lines. Our results yield the first approximation algorithms for computing a time-optimal search route in $P$ to guarantee some specified probability of detection of a static target within $P$, randomly distributed in $P$ according to a given prior distribution.
\end{abstract}

\section{Introduction}
We investigate the \textsc{Quota Watchman Route} problem (QWRP) and the \textsc{Budgeted Watchman Route} problem (BWRP) for a single mobile agent (a ``watchman'') within a polygonal domain $P$ in the plane. These problems naturally arise in various applications, including motion planning, search-and-rescue, surveillance, and exploration of a polygonal domain, where complete coverage is not feasible due to shortage of fuel, time, etc. The QWRP seeks a route/tour that sees at least some specified area of the domain $P$ with a shortest length, while the BWRP seeks a route/tour that sees the maximum area subject to a length constraint. Both can be seen as extensions of the well-known \textsc{Watchman Route Problem} (WRP) with different objectives and constraints.

The challenge in addressing the trade-off between area seen and tour length is that one is not able to exploit the optimality structure that is implied by having to see \emph{all} of a polygon $P$. It is this structure, yielding an ordered sequence of ``essential cuts'', that allows the WRP to be solved efficiently, e.g., as an instance of the ``touring polygons problem'' \cite{dror2003touring}.

{\bf Results:} We address the challenge by establishing new structural results that enable a careful discretization and analysis, along with carefully crafted dynamic programs. We provide several new results on optimal visibility search in a polygon:
\begin{description}
    \item[(1)] We prove that the QWRP and the BWRP are (weakly) NP-hard, even in a simple polygon; this is to be contrasted with the WRP, for which exact polynomial-time algorithms are known in simple polygons. 
    \item[(2)] For the QWRP in a simple polygon $P$, we give the first fully polynomial-time approximation scheme (FPTAS), as well as a dual-approximation (with slightly more efficient running time than the FPTAS) that computes a tour having length at most $(1+\varepsilon_1)$ times the length of an optimal tour that sees area at least $A$ (where $A$ is the area quota), while seeing area at least $(1-\varepsilon_2)A$ for any $\varepsilon_1, \varepsilon_2 > 0$. 
    \item[(3)] For the BWRP in simple $P$, we compute, in polynomial time,  a tour of length at most $(1 + \varepsilon)B$ that sees a region within $P$ of area at least that seen by an optimal tour of length at most $B$. 
    \item[(4)] In a multiply connected domain, in a polygon $P$ with holes, we provide hardness of approximations and a $(1 + \varepsilon, O(\log n))$-dual approximation ($n$ is the number of vertices of $P$) for the BWRP\@. In the special case of an arrangement of lines, we obtain polynomial-time exact algorithms for both problems.

    \old{SoCG Reviewer 2: line 47:  log(OPT) does not make sense in this form, as it is not scale-invariant.    The approximation factor doesn't improve when you scale the input by, say, 0.001.}

    \item[(5)] We solve two visibility-based stochastic search problems that seek to locate a static target given a prior probability distribution of its location within $P$: (a) compute the minimum time to achieve a specified detection probability; (b) compute a search route maximizing the probability of detection by time $T$ for a mobile searcher.
    \end{description}
    
\subsection*{Related Work}

Chin and Ntafos introduced the classic \textsc{Watchman Route Problem} (WRP) \cite{first}: compute a shortest closed route (tour) within a polygon $P$ from which every point of $P$ can be seen; they gave an $O(n)$-time algorithm for computing an optimal tour in a simple orthogonal polygon, and later results established polynomial-time exact algorithms for the WRP in a simple polygon $P$, both with and without an anchor point (depot)~\cite{carlsson1999finding, first,chin1991shortest,dror2003touring,ntafos1994optimum,tan2001fast, corrigendum}. In a polygon $P$ with holes, the WRP is NP-hard~\cite{first,dumitrescu2012watchman} and is, in fact, NP-hard to approximate better than a logarithmic factor~\cite{mitchell2013approximating}; however, an $O(\log^2 n)$-approximation algorithm is known~\cite{mitchell2013approximating}. The BWRP and the QWRP are natural variants of the WRP.   

Another related problem is that of maximum visibility coverage with point guards: Given an integer $k$, place $k$ point guards within $P$ to maximize the area of $P$ seen by the guards. When $k$ is arbitrary, the problem is NP-hard \cite{ntafos1994optimum}, since an exact solution to this problem would yield a method to compute the minimum number of guards needed to see a polygon. Viewed as a maximum coverage problem, one can greedily compute an approximation, with factor $\left(1 - \frac{1}{e}\right)$, by iteratively placing a guard that sees the most unseen area \cite{cheong2007finding, ntafos1994optimum}.

The BWRP is related to the \textsc{Orienteering} problem. Given a budget constraint and an edge-weighted graph where each vertex is associated with a prize, the objective of \textsc{Orienteering} is to find a path/tour within the length budget maximizing the total reward of the vertices visited. On the other hand, the QWRP is related to the \textsc{Quota Traveling Salesperson} problem, which aims to minimize the distance travelled to achieve a given quota of reward. The Euclidean versions of \textsc{Orienteering} and \textssc{Quota} TSP have polynomial-time approximation schemes~\cite{chen2008euclidean,gottlieb2022faster,mitchell2000geometric}. Both the QWRP and the BWRP can be considered a reward (the area of $P$ seen by the watchman) collecting process; however, the main difference lies in the continuous nature of visibility, since we see portions of the domain as we travel to checkpoints, we must take into account the area that has been seen previously.

Optimal search theory has been extensively studied in discrete, graph theoretic settings; see, e.g., \cite{eagle1984optimal, eagle1990optimal, trummel1986complexity}. In geometric contexts, searching and target tracking have been studied in the form of \textsc{Visibility-based Pursuit-Evasion} games. In \cite{guibas1997visibility}, the visibility-based version of the pursuit-evasion game was introduced and formulated as a geometric problem, in which an evader moves unpredictably, arbitrarily fast within a polygonal domain, and the goal is to strategically coordinate one or multiple pursuers to guarantee a finite time of detection. See the survey~\cite{chung2011search} on visibility-based pursuit-evasion games.

\section{The QWRP in a Simple Polygon}
\label{sec:quota}
\subsection{Preliminaries and Hardness Results}
A \textit{simple polygon} $P$ is a simply connected subset of $\mathbb{R}^2$ whose boundary, $\partial P$, is a polygonal cycle consisting of a finite set of line segments, whose endpoints are the \emph{vertices}, $v_1,v_2,\ldots,v_n$, of $P$. A vertex is \textit{reflex} (resp. \textit{convex}) if its internal angle is at least (resp. at most) 180 degrees. We consider polygons to be closed sets, including the interior and the boundary. We use the notation $|\cdot|$ to denote the measure of several types of objects. In the case of a segment or a route $\gamma$, $|\gamma|$ denotes its length, while for a polygon $P$, $|P|$ denotes the area of $P$. For a finite set $S$, $|S|$ is the cardinality of $S$. Based on the object within the notation, the interpretation should be apparent.

Point $x\in P$ sees point $y\in P$ if the line segment connecting them lies entirely within $P$. The \textit{visibility polygon} of $x$, denoted $V(x)$, is the closed region of $P$ that $x$ sees; necessarily, $V(x)$ is a simple polygon within $P$. For a subset $X\subset P$, let $V(X)$ be the set of points that are seen by at least one point in $X$; formally, $V(X) = \bigcup_{x\in X}V(x)$. The visibility polygon of a point or a segment can be computed in time $O(n)$ for a simple polygon, or in time $O(n+h\log h)$ for a polygonal domain with $n$ vertices and $h$ holes \cite{o2017visibility}. Given a domain $P$ (a simple polygon or a polygon with holes) and an area quota $0 \le A\le |P|$, in the QWRP, the objective is to find a \textit{tour} (a polygonal cycle) $\gamma \subset P$ of minimum length $|\gamma|$ such that $|V(\gamma)| \ge A$; see Figure~\ref{fig:relativeconvex}. Note that when $A = |P|$, QWRP is the classic \textsc{Watchman Route Problem}.

\begin{figure}[h]
        \centering
        \includegraphics[width=0.8\textwidth]{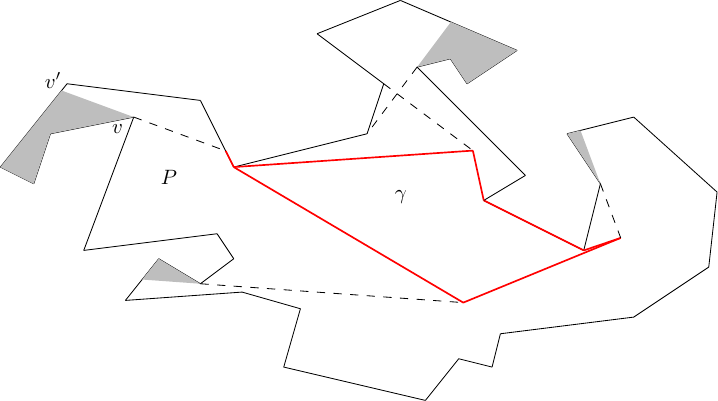}
        \caption{A route $\gamma$ (red) that sees the white portion of $P$ (the gray regions are unseen).}
        \label{fig:relativeconvex}
    \end{figure}
We also distinguish between the \textit{anchored} version (in which $\gamma$ must pass through a given depot point $s$) and the \textit{floating} version (in which no depot is given). 
We provide the following NP-hardness results for both the anchored and floating cases: 

\begin{theorem}\label{thm:weakly-np-hard-qwrp}
The QWRP in a simple polygon is weakly NP-hard, with or without an anchor.  
\end{theorem}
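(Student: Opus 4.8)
The plan is to reduce from a weakly NP-hard number problem, most naturally \textsc{Partition} (or \textsc{Subset-Sum}), since the theorem only claims \emph{weak} NP-hardness. Given a \textsc{Partition} instance with positive integers $a_1,\dots,a_m$ summing to $2S$, the goal is to build a simple polygon $P$ together with an area quota $A$ such that a short tour seeing area $A$ exists if and only if the $a_i$ can be split into two groups each summing to $S$. The intuition is to encode the numbers $a_i$ as geometric ``gadgets'' (small pockets or spikes) whose visible areas are proportional to the values $a_i$, arranged so that the watchman must make a binary choice---go \emph{into} a pocket or skip it---and so that the length budget needed to reach the chosen pockets forces a balanced selection.

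First I would lay out a long thin horizontal corridor as the ``spine'' of the polygon, with the depot $s$ (in the anchored case) placed at one end. Along the spine I would attach $m$ gadgets, each gadget being a narrow dead-end pocket that sticks out (alternating above and below the spine, or all on one side) and whose deep interior area is $c\cdot a_i$ for a fixed scale $c$, but which is \emph{invisible} from the spine: the pocket must have a reflex ``lip'' so that to collect its area the watchman has to detour a fixed extra distance $\delta$ into the mouth of the pocket. The key design constraint is to decouple ``which pockets I enter'' (controlling area seen) from ``how far I walk'' (controlling length), so that entering any $k$ pockets costs the same incremental length $k\delta$ regardless of \emph{which} $k$, while the area collected is $c\sum_{i\in T} a_i$ over the entered set $T$. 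I would then set the area quota to $A = c\cdot S$ (plus whatever baseline area is always visible from the spine) and the length budget implicitly through the optimization so that the optimal tour enters exactly half the pockets. The equivalence I want is: a tour of the target length sees area $\ge A$ iff some subset $T$ has $\sum_{i\in T} a_i = S$, i.e. iff the \textsc{Partition} instance is a YES-instance.

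The main obstacle, and where the construction needs the most care, is controlling the \emph{continuous} nature of visibility so that the visible area is genuinely a clean sum $c\sum_{i\in T}a_i$ with no cross-terms or partial-visibility leakage: as the watchman travels, it may glimpse a sliver of a pocket it does not fully enter, or see part of one pocket from inside another, which would blur the discrete sum the reduction relies on. To prevent this I would make the pocket mouths narrow and the pockets deep and ``shielded'' (each behind its own reflex vertex so that a point on the spine sees \emph{zero} of the pocket interior, and a point inside one pocket's mouth sees essentially none of any other pocket), and I would make the spine's own visible area a fixed constant independent of the route. A secondary technical point is to ensure the polygon remains \emph{simple} (no holes), which constrains the gadgets to be boundary pockets rather than obstacles and forces the alternating/stacked layout; I would verify that the tour-length arithmetic (spine traversal plus $\sum_{i\in T}\delta$ detours, returning to close the cycle) separates YES from NO instances with a gap large enough to survive the polynomial-size, polynomially-bounded-coordinate requirement that keeps the reduction within weak NP-hardness.

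Finally I would handle both the anchored and floating cases: the anchored reduction places $s$ at one end of the spine and measures total tour length including the return; the floating case follows by the same construction, observing that any optimal floating tour can be ``anchored'' at the spine endpoint nearest the first gadget at negligible cost, so the length threshold separating YES from NO instances is preserved up to the additive slack built into the gadget spacing. The check to perform is that in both versions the decision ``does there exist a tour of length $\le L$ seeing area $\ge A$'' is answered YES precisely when the numbers partition evenly, completing the Karp reduction and establishing weak NP-hardness.
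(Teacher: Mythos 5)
Your high-level plan (pockets along a corridor whose areas encode the input numbers, with shielding so that visibility is collected in clean discrete chunks) is in the right spirit and matches the paper's construction in outline, but the specific design you commit to has a fatal flaw: you explicitly \emph{decouple} the detour cost from the identity of the pocket, so that entering any $k$ pockets costs the same incremental length $k\delta$ while the area collected is $c\sum_{i\in T}a_i$. Under that design the induced combinatorial problem is trivial. For the QWRP, minimizing tour length subject to seeing area at least $cS$ becomes ``choose the fewest pockets whose values sum to at least $S$,'' which is solved greedily by taking the largest $a_i$ first; likewise the budgeted version becomes ``pick at most $k$ pockets maximizing $\sum a_i$,'' again greedy. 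No NP-hardness can come out of a construction in which all choices have the same cost and only the rewards differ, so the claimed equivalence with \textsc{Partition} (``the optimal tour enters exactly half the pockets'') does not hold --- \textsc{Partition} constrains the \emph{sum} of the chosen numbers, not their count, and nothing in your length accounting ties the tour length to that sum.

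The fix is precisely what the paper does: each number must be encoded \emph{twice}, once as the area of the gadget's wings ($V_i$) and once as the \emph{depth} of the detour needed to see them ($l_i$, up to an additive $O(\delta)$ slack). The paper reduces from \textsc{Inverse Knapsack} (minimize total weight subject to total value $\ge V$), which is the natural source problem since QWRP is itself a min-cost-subject-to-quota problem; large $V_{\max}$ roofs force the tour to pass every gadget's entrance so the $m$ detour decisions are independent, and the $m\delta$ bound on partial-visibility leakage plays the role of your ``shielding'' argument. With weights in the lengths and values in the areas, a tour of length at most $L_0+W$ seeing area at least $A_0+V$ exists iff some item subset has total weight at most $W$ and total value at least $V$, which is what a valid weak-NP-hardness reduction needs. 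Your floating-versus-anchored remark is fine and matches the paper's treatment, but the reduction itself must be repaired before that step matters.
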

\begin{proof}
    We use a reduction from \textsc{Inverse Knapsack}, which is known to be NP-hard, to show that QWRP is NP-hard, even in a simple polygon. Consider an instance of \textsc{Inverse Knapsack}: Given a value quota $V$, items of integral weights $(l_1,\ldots,l_m)$, and values $(V_1,\ldots,V_m)$, we are to determine a combination of items having minimum total weight such that their collective value is at least~$V$.

We construct a polygon similar to what is shown on the right of Figure~\ref{fig:floating-np-hard-gadget}. The construction is built from several ``roof'' gadgets (see Figure~\ref{fig:floating-np-hard-gadget}, left).
Each gadget consists of a rectangle of side lengths $\delta$ and $3\delta$, where $\delta \ll \frac{1}{1000m}\min\{l_i\}$ and two wings of area $V_i$ in total.
To see both wings, the watchman needs to visit a point in the middle of the rectangle. If the watchman chooses to visit only the bottom of the rectangle instead, there will be a savings of at most $\delta$ in distance travelled.
Additionally, there are roofs with large area $V_{\max} \gg \sum_{i=1}^m{V_i}$.
Along the boundary of this polygon are $m$ roof gadgets of areas $V_1,\ldots V_m$, each neighbors two gadgets of area $V_{\max}$.
We let $V_c$ denote the area of the remaining portion of the polygon without the $V_{\max}$ gadgets and the wings of the $V_i$ gadgets.
Traveling from one gadget of area $V_{\max}$ to another of the same area requires a cost of about $l_0 \gg B$.
If the watchman detours to visit a rectangle of some $V_i$ roof, he must travel an additional distance of at least $l_0 + l_i - \delta$.
We give the watchman a quota of $V_c + mV_{\max} + V$. Since the $V_{\max}$ gadgets provide very large profits, the tour has to visit all of them, as well as a subset of the remaining roofs whose total area is at least $V$. Note that the tour does not necessarily commit to seeing all gadgets chosen completely. However, by design, the difference in distance travelled between seeing the gadgets ``completely'' and ``partially'' is at most $m\delta \ll \frac{1}{1000}\min\{l_i\}$ and thus, the watchman must necessarily select a subset of roofs that minimizes the total detour length.
\begin{center}
    \begin{figure}[h]
        \centering
        \includegraphics[width=\textwidth]{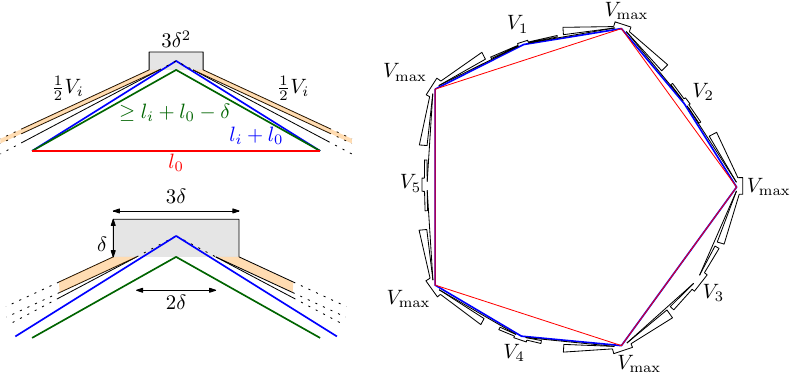}
        \caption{Left: The roof gadgets (the $V_i$ gadget above and the $V_{\max}$ gadget below) used in the hardness proof for the QWRP and the BWRP. 
        Note that the illustration is not drawn to scale.
        Right: The red tour has length $ml_0$; it is a shortest tour visiting all of the $V_{\max}$ gadgets. 
        By making a detour of length at least $l_i - \delta$, an additional area $V_i$ can be seen. The detour choices are independent of each other, due to the $V_{\max}$ gadgets that isolate them.}
        \label{fig:floating-np-hard-gadget}
    \end{figure}
\end{center}
Given a tour that sees $V_c + mV_{\max} + V$, we can convert it to a solution for the \textsc{Inverse Knapsack} instance with profit at least $V$ by looking at the set of roofs the watchman visits other than the ones with $V_{\max}$ area.
Conversely, given an \textsc{Inverse Knapsack} solution that has profit $V$, we can create a tour that sees area at least $V_c + mV_{\max} + V$.
The above construction is for the floating version of QWRP;
to show that the anchored version is also NP-hard, we simply use the same gadgets and place the starting point in the gap of any $V_{\max}$ roof.

\end{proof}
Throughout the paper, we assume a real RAM model of computation~\cite{shamos1978computational}.

\old{It is crucial to highlight that due to the irrational nature of Euclidean lengths, we assume a real RAM model of computation, one which allows for addition, subtraction and comparison of real numbers (in constant time per operation) \cite{shamos1978computational}. The LEDA library \cite{naher1990leda} (among others) mimics real RAM using specialized data structures, allowing running time analysis based on the number of library calls.
}

\subsection{Structural Lemma}
\label{subsec:struct_lemmas}

 Let $\pi_P(x,y)$ denote the \textit{geodesic shortest path} (shortest path constrained to stay within $P$) between $x\in P$ and $y \in P$; $\pi_P(x,y)$ is unique in a simple $P$, and is the segment $xy$ if $x$ sees $y$. For a subset $S\subseteq P$, the \textit{relative/geodesic convex hull} of $S$ is the minimal set that contains $S$ and is closed under taking shortest paths. Equivalently, the relative convex hull of $S$ is the minimum-perimeter connected subset of $P$ that contains $S$. A set is \textit{relatively convex} if it is equal to its relative convex hull, and a closed curve is relatively convex if it is the boundary of a relatively convex set. Let $P_\gamma$ denote the connected region bounded by some closed polygonal chain $\gamma$. If $P_\gamma$ is a (sub)polygon of $P$ and $P_\gamma$ is relatively convex, then $P_\gamma$ is the relative convex hull of its convex vertices, and all reflex vertices of $P_\gamma$ are necessarily reflex vertices of $P$. We similarly define relative convexity of an open polygonal chain $\gamma$: if $\gamma$ is a connected subset of the boundary of the relative convex hull of $\gamma$, then we say that $\gamma$ is relatively convex. Geodesic shortest paths and relative convex hulls have been studied extensively and can be computed efficiently \cite{mitchell2000geometric}. 

An optimal solution to the QWRP in a simple polygon $P$ satisfies a structural lemma:

\begin{lemma}
\label{lem:structure}
    For a simple polygon $P$ with $n$ vertices, and no depot, an optimal QWRP tour is a relatively convex simple polygonal cycle of at most $2n$ vertices. 
\end{lemma}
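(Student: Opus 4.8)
The plan is to take an arbitrary optimal tour $\gamma$ and show that replacing it by the boundary of its relative convex hull can only shorten it and can only enlarge the seen area, so that an optimal tour may be assumed relatively convex; the simplicity and vertex‑count claims then follow from the geometry of relatively convex regions in a simple polygon. Concretely, let $R$ be the relative convex hull of (the image of) $\gamma$ and let $\gamma' = \partial R$. Since $R$ is relatively convex and $P$ is simply connected, $R$ is simply connected, so $\gamma'$ is a single simple polygonal cycle; this yields the simplicity claim once we argue $\gamma'$ is (essentially) the optimal tour.

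For the length, I would invoke the characterization of the relative convex hull as the minimum‑perimeter subset of $P$ containing its argument: since the closed region $\overline{P_\gamma}$ bounded by $\gamma$ is one such subset and has perimeter at most $|\gamma|$, we get $|\gamma'| = \operatorname{perim}(R) \le |\gamma|$. For the area, I would establish the monotonicity $V(\gamma') \supseteq V(\gamma)$ by a ray‑shooting argument. Fix $p \in V(\gamma)$, witnessed by $x \in \gamma \subseteq R$ with $xp \subseteq P$. If $p \notin R$, the segment $xp$ leaves $R$, so it meets $\partial R$ at a last point $a$ before $p$; then $ap \subseteq xp \subseteq P$, so $a \in \gamma'$ sees $p$. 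If instead $p \in \operatorname{int}(R)$, shoot any ray from $p$ and let $a$ be its first intersection with $\partial R$; the open segment $pa$ stays in $\operatorname{int}(R) \subseteq P$, so again $a \in \gamma'$ sees $p$ (and points $p \in \partial R$ see themselves). Hence $|V(\gamma')| \ge |V(\gamma)| \ge A$, so $\gamma'$ is feasible and no longer than $\gamma$. Thus \emph{some} optimal tour is relatively convex; to upgrade this to \emph{every} optimal tour, I would argue that when $\gamma$ is not relatively convex the perimeter drop above is strict (pulling a concavity taut strictly shortens), contradicting optimality.

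It remains to bound the number of vertices of the relatively convex cycle $\gamma'$. Its reflex vertices are, by the structural fact quoted above for relatively convex subpolygons, necessarily reflex vertices of $P$, of which there are at most $n$. The crux — and the step I expect to be the main obstacle — is bounding the convex vertices, since in general a relatively convex curve can turn convexly at arbitrary interior points (e.g.\ a triangle deep inside $P$), so the bound must exploit optimality rather than pure geometry. Here I would use a corner‑cutting exchange: at a convex vertex $c$ of $\gamma'$ that is not a vertex of $P$, replacing the corner at $c$ by a short chord strictly shortens the tour, so by optimality it must drop the seen area below $A$; this forces $c$ to ``peek'' around some reflex vertex $v$ of $P$, the only places where the visible region can change under such a local move. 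Charging each such convex vertex to the reflex vertex it guards — and verifying that each reflex vertex is charged only a bounded number of times — would bound the convex vertices by $O(n)$, and a careful accounting by $n$, giving at most $2n$ vertices in total. Making this charging precise (in particular ruling out many convex vertices all guarding the same corner, and handling degenerate alignments where $c$ lies on an edge of $P$) is the delicate part of the argument.
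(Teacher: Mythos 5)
Your first half --- replacing $\gamma$ by the boundary $\gamma'=\partial R$ of its relative convex hull, using the minimum-perimeter characterization for $|\gamma'|\le|\gamma|$ and a ray-shooting argument for $V(\gamma')\supseteq V(\gamma)$ --- is correct, and is a legitimate alternative route to the relative-convexity claim. The genuine gap is the vertex bound, and you have flagged it yourself: the corner-cutting/charging scheme is only a sketch, and the points you defer (several convex vertices guarding the same reflex vertex, reflection points lying in the relative interior of a window so that the tour meets the window in a subsegment contributing two vertices, degenerate alignments) are exactly where the work lies. As written, nothing in your argument rules out each reflex vertex anchoring two windows with two reflection vertices apiece, which already overshoots $2n$ once the reflex vertices of the tour are added; getting the stated constant requires reproducing a nontrivial piece of the watchman-route structure theory rather than a generic local-exchange argument.

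The paper sidesteps all of this with one observation that also yields relative convexity for free: set $P'=V(\gamma)$. Since $\gamma$ is connected and $P$ is simple, $P'$ is a simple subpolygon of $P$ obtained by cutting off shadow pockets along chords anchored at reflex vertices of $P$; each such chord introduces one new vertex but removes at least one vertex of $P$, so $P'$ has at most $n$ vertices. By optimality, $\gamma$ must be a shortest watchman route of $P'$ (a strictly shorter route seeing all of $P'$ would still meet the area quota in $P$), and the known structural results for the WRP in a simple polygon --- every shortest watchman route is relatively convex and has at most $2n'$ vertices in an $n'$-vertex polygon --- then apply verbatim to give the bound $2n$. To repair your proof you should either import the bound this way or actually carry out the charging argument; the former is the intended and far shorter route.
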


\begin{proof}
Let $\gamma$ be an optimal QWRP tour and let $P' = V(\gamma)$ be the visibility polygon of $\gamma$. Since $\gamma$ is connected, $P'$ is a simple subpolygon of $P$; some edges of $P'$ coincide with edges of $P$ and some are shadow chords (chords separating $V(\gamma)$ from the rest of $P)$ supported by reflex vertices of $P$. Then $\gamma$ is a shortest watchman route in the simple polygon $P'$. Thus, $\gamma$ is relatively convex in $P'$, and thus in $P$, and $\gamma$ has at most $2n$ vertices, since $P'$ is easily seen to have at most $n$ vertices. (See \cite{chin1991shortest,mitchell2013approximating}.)
    
    Specifically, the polygon $P'=V(\gamma)$ is obtained from $P$ by removing certain subpolygons (``shadow pockets'') of $P$ that are each defined by a chord, $vv'$, extending from a reflex vertex, $v$, of $P$, along the line through $v$ and a convex vertex of $\gamma$, to the first point $v'$ on the boundary of $P$. This process introduces a (convex) vertex $v'$ (on an edge of $P$, in general on its interior), and removes at least one vertex of $P$, on the boundary of the pocket that is cut off by the chord. Refer to Figure~\ref{fig:relativeconvex}. Thus, $P'$ has at most $n$ vertices.
    For a simple polygon with $n$ vertices, any shortest watchman route has at most $2n$ vertices \cite{carlsson1999finding, first,chin1991shortest,dror2003touring,ntafos1994optimum,tan2001fast, corrigendum}. Moreover, all reflex vertices of $P'$ must be reflex vertices of $P$, hence all reflex vertices of $\gamma$ must also be reflex vertices of $P$.
\end{proof}

    If there is a specified depot $s\in\partial P$, a statement similar to Lemma~\ref{lem:structure} holds. If $s$ is interior to $P$, an optimal tour $\gamma=(s,w_1,w_2,\ldots,w_k,s)$ through $s$ need not be relatively convex; however, it is ``nearly'' relatively convex in that the tour obtained by replacing the two edges $sw_1$ and $w_ks$ with the geodesic path $\pi_P(w_1,w_k)$ is relatively convex. 
    
    \subsection{Dual approximation algorithm for anchored QWRP} 
    \label{sec:dualapproxqwr}
    An optimal tour for the QWRP will, in general, have (convex) vertices that are interior to $P$, at locations within the continuum that are not known to come from a discrete set. This poses a challenge to algorithms that are to compute solutions for the QWRP exactly or approximately. We address this challenge by discretizing an appropriate portion of the domain $P$ using a (Steiner) triangulation whose faces are small enough that we can afford to round an optimal tour to vertices of the triangles, while increasing the length of the tour only slightly, and assuring that the rounded tour continues to see at least as much of $P$ as the optimal tour did. We focus here on the anchored case, with a specified depot $s$, which we assume to be on $\partial P$ for now.

    First, we triangulate $P$ (in $O(n)$ time \cite{chazelle1991triangulating}), including $s$ as a vertex of the triangulation. We then overlay, centered on $s$, a regular square grid of pixels of side lengths $\delta$ within an axis-aligned square of size $L$-by-$L$ for a length $L$ that is at least the optimal tour length; we specify how to determine $\delta$ and $L$ below. The overlay of the grid with the triangulation yields a partition of $P$ into convex cells of constant complexity, each of which we triangulate, resulting in an overall Steiner triangulation of $P$, such that every triangle within distance $L/2$ of $s$ has diameter at most $\sqrt{2}\delta$ and perimeter at most $4\delta$; we let $S_{\delta,L}$ denote the set of vertices of these triangles. We refer to $S_{\delta,L}$ as the set of \textit{candidate turn points} for a route.

    \begin{lemma}
    \label{lem:snapping} For an optimal tour $\gamma$ for the QWRP with area quota $A$, there exists a polygonal tour $\gamma'$ whose vertices are in the set $S_{\delta,L}$ of candidates, such that $\gamma'$ is relatively convex, $|\gamma'|\le |\gamma|+(8+4\sqrt{2})\delta n$ and $V(\gamma) \subseteq V(\gamma')$.  
    \end{lemma}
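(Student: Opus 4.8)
The plan is to take an optimal QWRP tour $\gamma$ (which by Lemma~\ref{lem:structure}, in its anchored form, is relatively convex after the minor depot adjustment and has at most $2n$ vertices) and ``snap'' each of its vertices to a nearby candidate point in $S_{\delta,L}$, then argue that the resulting tour $\gamma'$ satisfies all three claimed properties. First I would observe that each vertex $w_i$ of $\gamma$ lies within some triangle of the Steiner triangulation, and since the optimal tour stays within distance $L/2$ of $s$ (because $L$ is chosen to be at least the optimal tour length), every such triangle has diameter at most $\delta\sqrt{2}$; hence each $w_i$ can be moved to a vertex of its containing triangle at cost at most $\delta\sqrt{2}$ per vertex. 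Summing over the at most $2n$ vertices and accounting for both endpoints of each displaced edge, the length increase is bounded by a constant times $\delta n$, which I expect to match the stated $(8+4\sqrt{2})\delta n$ after carefully tracking the per-vertex displacement and the triangulation's perimeter bound of $4\delta$.

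The more delicate requirement is the visibility containment $V(\gamma)\subseteq V(\gamma')$, since naive snapping could in principle lose visibility of some region. The key idea is that snapping should be done so that $\gamma'$ \emph{relatively-convexly encloses} $\gamma$: rather than moving each vertex to the nearest triangle vertex arbitrarily, I would snap so that the relative convex hull of the snapped points contains $\gamma$ (equivalently, push each convex vertex of $\gamma$ outward to a triangle vertex on the far side). Because visibility in a simple polygon is monotone under relative-convex enlargement of a relatively convex tour — enlarging the region $P_{\gamma}$ bounded by the tour can only expose more of $P$, never occlude what was already seen — this gives $V(\gamma)\subseteq V(\gamma')$. I would formalize this by arguing that every shadow chord that $\gamma$ avoids (i.e., every reflex vertex of $P$ that $\gamma$ sees past) is still seen past by the enlarged $\gamma'$, so no shadow pocket grows.

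The main obstacle I anticipate is reconciling the two competing demands simultaneously: the length bound wants us to snap to the \emph{nearest} candidate, while the visibility-preservation wants us to snap \emph{outward}, and we must also preserve relative convexity of $\gamma'$ so that the resulting object is a legitimate candidate for the subsequent dynamic program. The crux of the argument is that snapping each vertex to an appropriately chosen vertex of its containing triangle accomplishes all three at once, because the triangle has diameter at most $\delta\sqrt{2}$, so the outward snap still moves each vertex only $O(\delta)$. I would therefore carry out the steps in this order: (i) fix the containing triangle of each tour vertex and verify the diameter bound from the construction of $S_{\delta,L}$; (ii) define the outward snapping rule and verify that it yields a relatively convex tour $\gamma'$ with vertices in $S_{\delta,L}$; (iii) prove the visibility-containment via the monotonicity of visibility under relative-convex enlargement; and (iv) bound the length increase by summing the per-edge displacements. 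I expect step (iii) to require the most care, as it is where the geometric structure of shadow pockets (as described in the proof of Lemma~\ref{lem:structure}) must be invoked to ensure no visible region is lost.
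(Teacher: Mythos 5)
Your overall skeleton is right -- enclose $\gamma$ by a relatively convex tour with vertices in $S_{\delta,L}$, get $V(\gamma)\subseteq V(\gamma')$ from the enclosure, and charge the length increase locally -- but the central construction in your step (ii) has a genuine gap. You propose to snap each convex vertex $c_i$ of $\gamma$ to a \emph{single} ``outward'' vertex of its containing triangle and then claim that the relative convex hull of the chosen points contains $\gamma$. Such a vertex need not exist in any useful sense, and even when each triangle has a vertex outside $P_\gamma$, there is no argument that the relative convex hull of one chosen point per triangle contains every $c_j$: containment of the hull is a global condition, and a per-vertex ``push outward'' rule does not certify it (consider a very sharp convex vertex $c_i$ whose triangle's vertices all lie close to the two edges of $\gamma$ incident to $c_i$; no single choice is ``beyond'' $c_i$ with respect to the hull of the other chosen points). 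The paper sidesteps this entirely: it takes $\gamma'$ to be the boundary of the relative convex hull of the \emph{entire cells} $\sigma_1,\sigma_2,\ldots$ containing the $c_i$. Since each cell contains its $c_i$ and the cells are convex subsets of $P$, this hull automatically contains the relative convex hull of the $c_i$, which is $P_\gamma$; its boundary is relatively convex and its vertices are cell vertices or reflex vertices of $P$, all of which are candidates. If you want to keep a point-based formulation, snapping to \emph{all three} vertices of each containing triangle (rather than one) and taking the relative convex hull recovers the same object.

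Two smaller points. First, the length accounting changes with the correct construction: $\gamma'$ consists of bridging edges between consecutive cells (each at most $2\sqrt{2}\delta$ longer than the corresponding edge of $\gamma$, by the triangle inequality applied at both endpoints, giving $4\sqrt{2}\delta n$ over at most $2n$ edges) \emph{plus} arcs along the cell perimeters (at most $4\delta$ each, giving $8\delta n$); your per-vertex displacement bound accounts only for the first contribution, and your passing mention of the perimeter bound needs to be made into an explicit additive term to reach $(8+4\sqrt{2})\delta n$. Second, your step (iii) is more elaborate than necessary: once $\gamma'$ is a relatively convex closed curve enclosing $\gamma$ with $P_{\gamma'}\subseteq P$, any $x\in V(\gamma)$ is either inside $P_{\gamma'}$ (hence seen by its boundary) or outside it, in which case the sight segment from $x$ to a point of $\gamma$ must cross $\gamma'$; no analysis of shadow chords or pockets is needed. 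You should also record that $s\in\gamma'$ (because $s\in\partial P$ cannot lie in the interior of the subpolygon $P_{\gamma'}$), since the subsequent dynamic program requires the snapped tour to remain anchored.
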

    \begin{proof}
Let $c_1,c_2,\dots$ be convex vertices of the optimal tour $\gamma$ ($\gamma$ is the relative convex hull of such vertices) and let $\sigma_1,\sigma_2,\ldots$ be (closed) cells of the decomposition that contain the vertices. Let $\gamma'$ be the boundary of the relative convex hull of the cells. By construction, $\gamma'$ is a relatively convex tour enclosing $\gamma$, implying that any point seen by $\gamma$ is also seen by $\gamma'$. Furthermore, since $s\in \partial P$, it follows that $s$ cannot be in the interior of $P_{\gamma'}$, a subpolygon of $P$, thus $s\in \gamma'$.

We claim that $|\gamma'|$ is at most $|\gamma|+(8+4\sqrt{2})\delta n$. For each edge $e'$ of $\gamma'$ going from $\sigma_i$ to $\sigma_j$, we can bound its length by the sum of the length of the edge $e$ of $\gamma$ going from $\sigma_i$ to $\sigma_j$ ($\gamma'$ visits the cells containing the vertices of $\gamma$ in the same order) and at most two connections from endpoints of $e$ to vertices of $\sigma_i, \sigma_j$, which is no more than $2\sqrt{2}\delta$, see Figure~\ref{fig:snapping_bounding_length}, right. Additionally, the part of $\gamma'$ along the perimeters of $\sigma_1,\sigma_2,\ldots$ is no longer than $8\delta n$. Hence, $|\gamma'|\le |\gamma| + (8+4\sqrt{2})\delta n$.
\end{proof}

    \begin{figure}[h]
        \centering
        \includegraphics[width=1\textwidth]{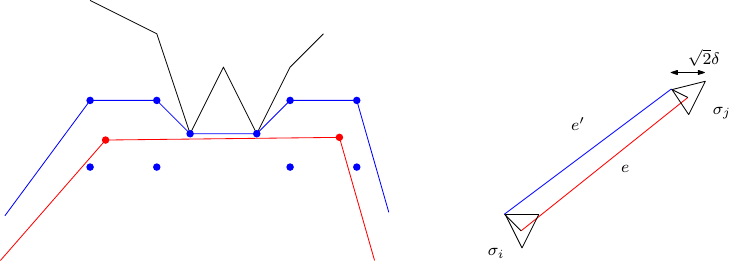}
        \caption{Left: $\gamma'$ (blue) is the relative convex hull of the vertices (blue) of the cells that contain convex vertices of $\gamma$ (red). Right: Each edge of $\gamma'$ that traverses between two different cells $\sigma_i, \sigma_j$ by triangle inequality, is no longer than the edge of $\gamma$ between the same cells plus at most two connections to two vertices of $\sigma_i, \sigma_j$.}
        \label{fig:snapping_bounding_length}
    \end{figure}

From Lemma \ref{lem:snapping}, if $\delta = O\left(\frac{\varepsilon |\gamma|}{n}\right)$, then for approximation purposes within factor $(1+\varepsilon)$, it suffices to search for a tour whose vertices come from $S_{\delta,L}$. In fact, our algorithm returns a tour no longer than $(1 + \varepsilon_1)|\gamma|$ for any $\varepsilon_1 > 0$; however, due to discretization of the area quota, we only guarantee the tour sees at least $(1-\varepsilon_2)A$ for any $\varepsilon_2 > 0$.

We now establish an ordering on the point set $S_{\delta,L}$, so that a relatively convex chain of the candidate points moves in increasing order. First, we compute $\mathcal{T}$, the tree of shortest paths rooted at $s$ to all the candidate points; this takes $O(|S_{\delta,L}|)$ time~\cite{guibas1986linear}. The path from $s$ to a candidate point $s'$ in $\mathcal{T}$ is the geodesic shortest path $\pi_P(s,s')$. Define a \textit{geodesic angular order} as follows: for two candidate points $s_i, s_j$, if $s_i$ is to the left of the extended geodesic shortest path between $s$ and $s_j$, i.e. $\pi_P(s, s_j)$ with the last segment extending up to $\partial P$, then $s_i$ precedes $s_j$. In case of ties, we break ties by increasing distance to $s$. For each reflex vertex $r_i$ of $P$, we add another candidate $s_{r_i}$ to the list to account for the possibility that $r_i$ can appear as two different vertices of a relatively convex polygonal chain; $s_{r_i}$ obeys the aforementioned geodesic angular order but precedes every candidate point in the subtree of $\mathcal{T}$ rooted at $r_i$. Sort the candidates accordingly, then append $s_m:= s_1$ to the end of the sorted list. Any relatively convex chain with vertices sequence oriented clockwise $(s, s_{i_1}, s_{i_2}, \ldots)$ has $1 < i_1 < i_2 < \ldots$. Without loss of generality, we consider any relatively convex polygonal chain to be oriented clockwise.

Next, we examine the optimal substructure of the problem.
\begin{lemma}[\cite{buchin2020geometric}]
\label{lem:visgeopath}
    The visibility region of the geodesic shortest path $\pi_P(s, s_j)$ is the inclusion-wise minimal set among all visibility regions of all paths from $s$ to $s_j$.
\end{lemma}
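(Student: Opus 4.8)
The plan is to prove the equivalent pointwise statement: for every $s$--$s_j$ path $\gamma$ and every point $p\in P$, if $p\in V(\pi_P(s,s_j))$ then $p\in V(\gamma)$. Since $\pi_P(s,s_j)$ is itself such a path, this shows $V(\pi_P(s,s_j))$ is contained in every competitor and hence is the (unique) inclusion-minimal visibility region. I would first reformulate visibility symmetrically: for any path (point set) $\alpha$, we have $p\in V(\alpha)=\bigcup_{x\in\alpha}V(x)$ if and only if $\alpha$ meets the visibility polygon $V(p)$. It therefore suffices to show that whenever the geodesic $\pi:=\pi_P(s,s_j)$ meets $V(p)$, so does every $s$--$s_j$ path $\gamma$; I would argue the contrapositive, assuming $\gamma\cap V(p)=\emptyset$ and deducing $\pi\cap V(p)=\emptyset$.

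Next I would invoke the shadow-pocket structure of $V(p)$. The region $V(p)$ is star-shaped with respect to $p$, and $P\setminus V(p)$ is a disjoint union of open \emph{shadow pockets}, each a simple subpolygon bounded by a \emph{window} chord $c=[r,w]$ lying on the line through $p$ and a reflex vertex $r$ of $P$ (with $w\in\partial P$), together with a chain of $\partial P$. Since $\gamma$ is connected and disjoint from $V(p)$, it lies in a single pocket $K$; in particular $s,s_j\in K$, as they cannot lie on the window $c\subseteq V(p)$ (otherwise $p$ would see them, contradicting $\gamma\cap V(p)=\emptyset$).

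The heart of the argument is to show that the unique geodesic $\pi$ is also trapped in $\overline{K}$ and avoids its window. For containment I would use an exchange argument: any $s$--$s_j$ path that leaves $\overline{K}$ must exit and re-enter across the straight chord $c$, say at $a,b\in c$; replacing the portion outside $\overline{K}$ by the subsegment $[a,b]\subseteq c\subseteq\overline{K}$ yields a path that is no longer (a straight segment is shortest) and stays in $\overline{K}$, so a shortest path cannot leave $\overline{K}$, and by uniqueness of geodesics in a simple polygon $\pi\subseteq\overline{K}$. It then remains to rule out $\pi$ merely \emph{touching} the window $c$, which is the delicate point, since $c\subseteq V(p)$ and a single contact would already give $p\in V(\pi)$. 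Here I would observe that the pocket's interior angle at each window endpoint is strictly less than $180^\circ$: the window lies on a line, the pocket occupies one side of that line, and the bounding polygon edge lies strictly in that half-plane, so both $r$ and $w$ are \emph{convex} vertices of $\overline{K}$. A taut geodesic between two interior points of a simple polygon neither turns at a convex boundary vertex nor runs along a straight boundary edge joining two convex vertices; hence $\pi\cap c=\emptyset$. Because $\overline{K}\cap V(p)=c$, this gives $\pi\cap V(p)=\emptyset$, i.e.\ $p\notin V(\pi)$, establishing the contrapositive.

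I expect the main obstacle to be precisely this closure issue: the statement asserts set-theoretic (not merely measure-theoretic) inclusion, so I must verify rigorously that the geodesic cannot graze the window at its reflex apex $r$. The convexity-of-the-pocket-angle observation is what makes this watertight; by contrast, if one only needs the \emph{areas} $|V(\cdot)|$ to compare (as the QWRP ultimately requires), the grazing case is measure zero and can be dispatched far more cheaply.
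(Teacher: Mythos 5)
Your proof is correct, but there is nothing in the paper to compare it against: the paper states this lemma as a known result, citing \cite{buchin2020geometric}, and gives no argument of its own. What you have written is a self-contained proof of the imported fact. The overall architecture is sound: the pointwise reformulation via symmetry of visibility ($p\in V(\alpha)$ iff $\alpha$ meets $V(p)$), the reduction to the contrapositive, trapping $\gamma$ and then $\pi$ in a single shadow pocket $\overline{K}$ via the shortcut-across-the-window exchange, and finally excluding contact with the window $c$ are all the right steps, and you correctly isolate the only delicate point (grazing contact, which matters because the claim is set-theoretic, not measure-theoretic). Two spots deserve slightly more careful wording. First, ``the pocket occupies one side of that line'' is false as a global statement (a pocket can wind around and re-cross the line through $c$); what you actually need, and what is true, is the local statement that near the edge $c$ the pocket lies in one closed half-plane of that line, and that the bounding edges of $\partial P$ at $r$ and at $w$ emanate into the corresponding open half-plane, which gives interior angles strictly below $180^\circ$ at both window endpoints. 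Second, your exclusion of contact should also dispose of the collinear case in which a straight edge of the geodesic overlaps a subsegment of $c$ (not only turning at $r$ or $w$ or crossing transversally); this is easily ruled out because extending such an overlap past either endpoint of $c$ exits $\overline{K}$, and its endpoints would have to be turning points of $\pi$, hence reflex vertices of $\overline{K}$, of which $c$ has none. Finally, the endpoints $s,s_j$ need not be interior points of $P$ (the paper's anchored version has $s\in\partial P$); your argument survives because all it uses is that $s,s_j\notin V(p)$, hence $s,s_j\notin c$. With these caveats, and the usual general-position assumptions on windows, the argument is complete.
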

Let $C$ be a relatively convex polygonal chain from $s$ to a candidate point $s_j$, and let $s_i$ be the vertex of $C$ immediately preceding $s_j$. We identify the overlap of visibility between the segment $s_is_j$ and $C_{s_i}$, the subchain of $C$ from $s$ to $s_i$ in Lemma~\ref{lem:opt_chain_seg}.

\begin{lemma}
\label{lem:opt_chain_seg}
    $V(C_{s_i})\cap V(s_is_j) = V(\pi_P(s,s_i))\cap V(s_is_j)$.
\end{lemma}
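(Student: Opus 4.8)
The plan is to prove the two inclusions separately, with the containment $\supseteq$ being an immediate corollary of Lemma~\ref{lem:visgeopath} and the reverse containment $\subseteq$ carrying all the geometric weight. For $\supseteq$, I would apply Lemma~\ref{lem:visgeopath} with the endpoint $s_i$ in place of $s_j$: since $C_{s_i}$ is one particular path from $s$ to $s_i$, its visibility region contains the inclusion-wise minimal one, so $V(\pi_P(s,s_i))\subseteq V(C_{s_i})$. Intersecting both sides with $V(s_is_j)$ yields $V(\pi_P(s,s_i))\cap V(s_is_j)\subseteq V(C_{s_i})\cap V(s_is_j)$, which is one half of the claimed equality. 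Because the reverse inclusion restricted to $V(s_is_j)$ is what remains, it suffices to establish $V(C_{s_i})\cap V(s_is_j)\subseteq V(\pi_P(s,s_i))$.

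For this harder inclusion I would fix a point $p\in V(C_{s_i})\cap V(s_is_j)$ together with witnesses $x\in C_{s_i}$ and $q\in s_is_j$ with $xp,qp\subseteq P$, and aim to produce a point of $\pi_P(s,s_i)$ that sees $p$. The key structural object is the relative convex hull $R$ of the chain $C_{s_i}$: by the definitions of Section~\ref{subsec:struct_lemmas}, $R$ is the minimum-perimeter (hence relatively convex) region whose boundary consists of the relatively convex chain $C_{s_i}$ on one side and the geodesic $\pi_P(s,s_i)$ as the ``closing'' shortest path on the other. Relative convexity of the full chain $C=C_{s_i}\cup s_is_j$ forces the outgoing segment $s_is_j$ to leave $s_i$ on the side of $\pi_P(s,s_i)$ opposite to $R$ (it continues the convex turn away from the hull). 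Thus $\pi_P(s,s_i)$ \emph{separates}, near $s_i$, the segment $s_is_j$ from the ``outer'' boundary $C_{s_i}$. I would then argue that any sightline realizing the visibility of $p$ from the segment side must be traced back, through this separating geodesic, to a point of $\pi_P(s,s_i)$ that still sees $p$.

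Concretely, the mechanism I expect to use is a sweep of the viewpoint along $s_is_j$ from $q$ toward $s_i$: let $q^{*}$ be the point of the segment closest to $s_i$ that still sees $p$. If $q^{*}=s_i$ then $s_i\in\pi_P(s,s_i)$ already sees $p$ and we are done; otherwise visibility becomes critical at $q^{*}$, meaning the sightline $q^{*}p$ grazes a reflex vertex $r$ of $P$ lying between $q^{*}$ and $p$, so that $r$ sees $p$. The crux is to show that this blocking vertex $r$, which shadows $p$ from the segment side, lies on (or is ``absorbed into'') the geodesic $\pi_P(s,s_i)$, so that $p$ is visible from $\pi_P(s,s_i)$ after all; here I would combine the separation property of $R$ with the hypothesis that $p$ is \emph{also} seen from the far side by $x\in C_{s_i}$, since the chain lies behind $\pi_P(s,s_i)$ relative to the direction from which $p$ is observed. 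This last step — making the crossing/separation argument fully rigorous, including the degenerate cases where $\pi_P(s,s_i)$ and $C_{s_i}$ share subsegments or where the blocking vertex coincides with $s_i$ — is the main obstacle, and is exactly the place where the relative convexity of the chain (rather than mere connectivity) is indispensable.
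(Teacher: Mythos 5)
Your first inclusion is exactly the paper's: Lemma~\ref{lem:visgeopath} gives $V(\pi_P(s,s_i))\subseteq V(C_{s_i})$, and intersecting with $V(s_is_j)$ finishes that direction. For the hard inclusion, however, your plan diverges from the paper's, and the step you yourself flag as ``the main obstacle'' is a genuine gap rather than a missing detail. Sweeping the viewpoint along $s_is_j$ toward $s_i$ and stopping at the critical position $q^{*}$ where visibility of $p$ is lost produces a reflex vertex $r$ on the sightline $q^{*}p$; this $r$ certainly sees $p$, but there is no reason it should lie on (or be ``absorbed into'') $\pi_P(s,s_i)$ --- it is simply whichever reflex vertex of $P$ first occludes $p$, and it can sit far from the geodesic. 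Moreover, your sweep never actually uses the second witness $x\in C_{s_i}$ except as a vague afterthought, yet that hypothesis is what makes the lemma true: a point seen only from the segment need not be seen from $\pi_P(s,s_i)$ at all.

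The paper closes this gap by pivoting the \emph{sightline around the seen point} $p$ rather than moving the viewpoint along the segment. Take witnesses $x_1\in C_{s_i}$ and $x_2\in s_is_j$ with $px_1,\,px_2\subseteq P$; since $P$ is simple, the pseudo-triangle $px_1x_2$ is contained in $P$. By the geodesic angular order, $s_j$ lies to the right of $\pi_P(s,s_i)$ (with its last segment extended to $\partial P$) while $C_{s_i}$ lies to its left, so inside the relatively convex polygon $P_{C\cup\pi_P(s,s_j)}$ the witnesses $x_1$ and $x_2$ are on opposite sides of $\pi_P(s,s_i)$. Rotating the segment from $px_1$ to $px_2$ within the pseudo-triangle, continuity forces it to cross $\pi_P(s,s_i)$, and the crossing point sees $p$. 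This one continuity argument replaces your entire blocking-vertex analysis; to salvage your sweep you would have to bring $x_1$ into play in essentially this way, at which point you have reproduced the paper's proof.
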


\begin{proof}
    Refer to Figure~\ref{fig:optimal_substructure_chain_seg}. Let $x\in P$ be a point seen by both $\pi_P(s, s_i)$ and $s_is_j$. Since $x\in V(\pi_P(s, s_i))$, it follows that  $x\in V(C_{s_i})$ (Lemma \ref{lem:visgeopath}). Thus, $x\in V(C_{s_i})\cap V(s_is_j)$ and $ V(\pi_P(s, s_i))\cap V(s_is_j)\subseteq V(C_{s_i})\cap V(s_is_j)$.

    On the other hand, let $x\in P$ be seen by both $C_{s_i}$ and $s_is_j$. Since $x\in V(C_{s_i})\cap V(s_is_j)$ there exists $x_1\in C_{s_i}$ and $x_2 \in s_is_j$ such that $xx_1$ and $xx_2$ are contained within $P$. Thus, the (pseudo)triangle $xx_1x_2$ is contained within $P$ since $P$ has no holes. By our ordering scheme, $s_j$ is to the right of $\pi_P(s,s_i)$ with the last segment extended up to $\partial P$, while $C_{s_i}$ is to the left of it. This implies that in the relatively convex polygon $P_{C\cup\pi_P(s_, s_j)}$, $x_1, x_2$ are in opposite sides with respect to $\pi_P(s, s_i)$. As we pivot a line of sight around $x$ from $x_1$ to $x_2$, it must intersect $\pi_P(s,s_i)$ at some point due to continuity as well as (relative) convexity, therefore $\pi_P(s,s_i)$ sees $x$. Hence, $V(C_{s_i})\cap V(s_is_j) \subseteq V(\pi_P(s, s_i))\cap V(s_is_j)$.
\end{proof}

    \begin{figure}[h]
        \centering
        \includegraphics[width=1.05\textwidth]{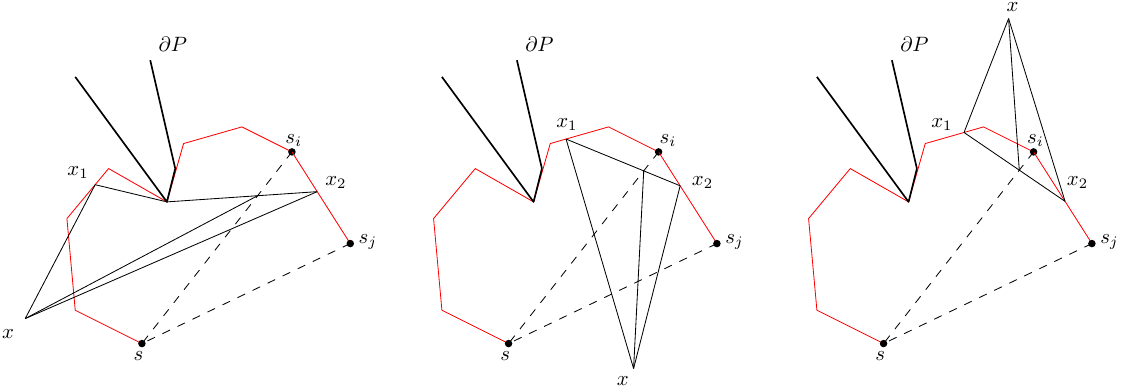}
        \caption{Proof of Lemma \ref{lem:opt_chain_seg}.}
        \label{fig:optimal_substructure_chain_seg}
    \end{figure}

Based on Lemma~\ref{lem:opt_chain_seg}, the overlap of visibility between the segment $s_is_j$, for $i < j$, and any relatively convex chain $C_{s_i}$ from $s$ to $s_i$ does not depend on the vertices between $s$ and $s_i$. This leads to the optimal substructure utilized by our dynamic programming algorithm. 

\begin{lemma}
\label{lem:opt_struct_quota}
    $C$ is a shortest relatively convex polygonal chain from $s$ to $s_j$ that sees at least some area $\overline{A}$ if and only if $C_{s_i}$ is a shortest relatively convex polygonal chain from $s$ to $s_i$ that sees at least area $\overline{A} - |V(s_is_j)\setminus V(\pi_P(s, s_i))|$.
\end{lemma}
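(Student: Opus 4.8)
The plan is to reduce the statement to two additive decompositions — one for length and one for visible area — and then close the ``if and only if'' by a standard cut-and-paste exchange, carried out within the class of relatively convex chains whose last edge is $s_is_j$ (equivalently, chains to $s_j$ with $s_i$ as the immediate predecessor); taking the minimum over the choice of predecessor $s_i$ is then exactly the recurrence used by the dynamic program. Write $C=C_{s_i}\cup s_is_j$. The length decomposition is immediate, $|C|=|C_{s_i}|+|s_is_j|$, and the increment $|s_is_j|$ depends only on the pair $(s_i,s_j)$. The entire force of the lemma is that the \emph{area} increment enjoys the same independence from the interior of $C_{s_i}$.

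First I would establish the area decomposition. By inclusion--exclusion,
\[
|V(C)| = |V(C_{s_i})| + |V(s_is_j)| - |V(C_{s_i})\cap V(s_is_j)|.
\]
Substituting Lemma~\ref{lem:opt_chain_seg}, which replaces the intersection $V(C_{s_i})\cap V(s_is_j)$ by $V(\pi_P(s,s_i))\cap V(s_is_j)$, and rewriting $|V(s_is_j)|-|V(\pi_P(s,s_i))\cap V(s_is_j)| = |V(s_is_j)\setminus V(\pi_P(s,s_i))|$, I obtain
\[
|V(C)| = |V(C_{s_i})| + |V(s_is_j)\setminus V(\pi_P(s,s_i))|.
\]
Set $\Delta := |V(s_is_j)\setminus V(\pi_P(s,s_i))|$. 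The key point is that $\Delta$ is determined entirely by $s$, $s_i$, and $s_j$ (through the geodesic $\pi_P(s,s_i)$) and is \emph{independent} of the vertices of $C_{s_i}$ strictly between $s$ and $s_i$. Hence, for every relatively convex chain from $s$ to $s_i$ extended by the fixed last edge $s_is_j$, both the length seen and the area seen are obtained from those of the prefix by adding the \emph{same} constants $|s_is_j|$ and $\Delta$, respectively.

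With this, both directions follow by exchange. For the forward direction, suppose $C$ is a shortest relatively convex chain to $s_j$ (with last edge $s_is_j$) seeing at least $\overline{A}$. The decomposition gives that $C_{s_i}$ sees $|V(C)|-\Delta\ge \overline{A}-\Delta$, so $C_{s_i}$ is feasible for the reduced subproblem. If $C_{s_i}$ were not shortest, some relatively convex chain $C'$ from $s$ to $s_i$ with $|C'|<|C_{s_i}|$ would see at least $\overline{A}-\Delta$; appending $s_is_j$ yields a relatively convex chain to $s_j$ seeing $|V(C')|+\Delta\ge\overline{A}$ of length $|C'|+|s_is_j|<|C|$, a contradiction. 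For the reverse direction, assume $C_{s_i}$ is shortest to $s_i$ seeing at least $\overline{A}-\Delta$. Any competitor $D$ to $s_j$ with last edge $s_is_j$ seeing at least $\overline{A}$ has the form $D=D_{s_i}\cup s_is_j$ with $D_{s_i}$ relatively convex and, by the same decomposition, seeing at least $\overline{A}-\Delta$; thus $|D|=|D_{s_i}|+|s_is_j|\ge |C_{s_i}|+|s_is_j|=|C|$, so $C=C_{s_i}\cup s_is_j$ is optimal.

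The step requiring care — and the one I expect to be the main obstacle — is checking that these exchanges stay inside the class of relatively convex chains. Passing to a prefix is harmless, since a prefix of a relatively convex chain is relatively convex. The nontrivial direction is re-appending $s_is_j$: I would argue that, because $i<j$, the geodesic angular order places $s_j$ to the right of the extension of $\pi_P(s,s_i)$ while every vertex of any relatively convex prefix to $s_i$ lies to its left, so the turn at $s_i$ toward $s_j$ is convex and the concatenation remains relatively convex (with the reflex-vertex duplicates $s_{r_i}$ absorbing the cases in which $s_i$ must act as two distinct turn points). This is exactly the geometric content already used in the proof of Lemma~\ref{lem:opt_chain_seg}, so the same ordering argument applies; making it precise is the only part of the proof that needs more than bookkeeping.
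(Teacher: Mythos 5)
Your proof is correct and follows essentially the same route as the paper: the paper likewise decomposes $|V(C)|$ as $|V(C_{s_i})| + |V(s_is_j)\setminus V(\pi_P(s,s_i))|$ by invoking Lemma~\ref{lem:opt_chain_seg} to replace the overlap $V(C_{s_i})\cap V(s_is_j)$ with $V(\pi_P(s,s_i))\cap V(s_is_j)$, and then closes both directions with the same cut-and-paste exchange. The one point you flag --- that re-appending $s_is_j$ must stay within the class of relatively convex chains --- is precisely the issue the paper defers to the remark following the Bellman recursion (where a non-convex concatenation is shortcut at the point of tangency), so your treatment matches the paper's in substance.
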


\begin{proof}
    We write $V(C)$ as the union of 2 disjoint sets $V(C_{s_i})$ and $V(s_is_j)\setminus V(C_{s_i})$. Notice that
    \begin{align*}
        &V(s_is_j)\setminus V(C_{s_i}) =V(s_is_j)\setminus (V(C_{s_i})\cap V(s_is_j)) = V(s_is_j)\setminus (V(\pi_P(s, s_i))\cap V(s_is_j)) \\& = V(s_is_j)\setminus V(\pi_P(s, s_i)),
    \end{align*}
    therefore $|V(C_{s_i})| \ge \overline{A} - |V(s_is_j)\setminus V(\pi_P(s, s_i))|.$
    As a result, $C_{s_i}$ must be the shortest chain to achieve a visibility area of $\overline{A} - |V(s_is_j)\setminus V(\pi_P(s, s_i))|$, since the existence of a shorter chain contradicts the optimality of $C$, and vice versa.
\end{proof}

A subproblem in the dynamic program is determined by a candidate point $s_j$ and an area quota $\overline{A}$. Let $\pi(s_j, \overline{A})$ denote the length of a shortest relatively convex polygonal chain from $s$ to $s_j$ that can see area at least $\overline{A}$; and let $C(s_j, \overline{A})$ denote the associated optimal chain. Initialize $\pi(s, |V(s)|) = 0$. The Bellman recursion for each subproblem with $j = 1, 2, \ldots, m$ and all values of $\overline{A}$ would be given as follows, for all $\overline{i} < j$ such that $s_j$ sees $s_{\overline{i}}$ and $C(s_{\overline{i}}, \overline{A} - |V(s_{\overline{i}}s_j)\setminus V(\pi_P(s, s_{\overline{i}}))|)\cup s_{\overline{i}}s_j$ is relatively convex:
    \begin{align*}
    \label{eq:recursion_mainDP}
    i &= \underset{\overline{i}}{\argmin}\left\{\pi(s_{\overline{i}}, \overline{A} - |V(s_{\overline{i}}s_j)\setminus V(\pi_P(s, s_{\overline{i}}))|) + |s_{\overline{i}}s_j|\right\},\\
        \pi(s_j, \overline{A}) &= \pi(s_i, \overline{A} - |V(s_is_j)\setminus V(\pi_P(s, s_i))|) + |s_is_j|,\\
        C(s_j, \overline{A}) &= C(s_i, \overline{A} - |V(s_is_j)\setminus V(\pi_P(s, s_i))|)\cup s_is_j.
    \end{align*}
    Finally, return $C(s_m, A)$. Correctness of the algorithm follows from the principle of optimality.

    \begin{figure}[h]
        \centering
        \includegraphics[width=0.5\textwidth]{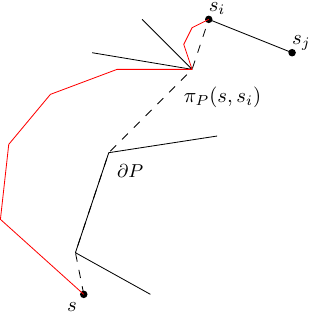}
        \caption{Solving subproblem $(s_j, \overline{A})$.}
        \centering\label{fig:dp_recursion}
    \end{figure}

Note that there always exists an optimal solution $i$ to the above recursion such that $C(s_i,~\overline{A}~-~|V(s_is_j)\setminus~V(\pi_P(s, s_i))|)\cup~s_is_j$ is relatively convex. Otherwise, we can shortcut $C(s_i,~\overline{A}~-~|V(s_is_j)\setminus~V(\pi_P(s, s_i))|)\cup~s_is_j$ by connecting $s_j$ to the closest reflex vertex (of $P$) or the point of tangency in $C(s_i, \overline{A} - |V(s_is_j)\setminus V(\pi_P(s, s_i))|)$.

Since $\overline{A}$ can take values from a continuous set, it is impractical to tabulate all such values. Instead, we bucket $A$ into uniform intervals, and let the subproblems be defined by interval endpoints. We round down the area of any visibility polygon to the nearest interval. Since we sum up the area of at most $2(n-3) + \frac{2L}{\delta}$ visibility polygons (each of the $n-3$ diagonals in the triangulation of $P$ and $\frac{L}{\delta}$ horizontal/vertical grid lines potentially has at most 2 vertices of the tour returned by the dynamic programming algorithm since we enforce relative convexity), if we denote by $I$ the length of each interval, the area lost by rounding down is at most $I\left(2(n-3) + \frac{2L}{\delta}\right)$. We run the algorithm on the ``rounded down'' instance with area quota $A~-~I\left(2(n-3) + \frac{2L}{\delta}\right)$, and since the optimal solution of the original instance is a feasible solution of the new instance, the algorithm returns a tour no longer than an optimal tour $\gamma'$ (that sees at least area $A$).

It remains to compute an appropriate $L$ such that an optimal tour $\gamma$ is contained within the bounding box of the grid.  Denote by $C_g(r)$ the geodesic disk of radius $r$ centered at $s$ (the locus of points whose length of the geodesic path to $s$ is no greater than $r$). Let $r:=r_{\min}$ where $r_{\min}$ is the smallest value of $r$ such that $|V(C_g(r))| = A$; then, $r$ is a lower bound on $|\gamma|$, since a tour of length $r$ has geodesic radius at most $r/2$ and thus cannot see an area of $A$. We can compute $r$ by the ``visibility wave'' methods in \cite{quickest} by considering all $O(n^2)$ edges of the visibility graph $G_v$ of $P$ (nodes are vertices of $P$ and two nodes are adjacent if they are visible to one another); we have a sequence of visibility edges hit by $C_g(r)$ for the first time in the process of increasing $r$, obtained by sorting the distance from every visibility edge to $s$ in $O(n^2\log n)$ time.

Moreover, $|\partial C_g(r)| + 2r$ is an upper bound on $|\gamma|$, since if the watchman goes from $s$ to $\partial C_g(r)$ ($s$ may not be on $\partial C_g(r)$), follows along $\partial C_g(r)$ then goes back to $s$, he sees an area of $A$. We argue that $|\partial C_g(r)| + 2r = O(nr)$ as follows: $\partial C_g(r)$ consists of polygonal chains that are portions of $\partial P$ and at most $n$ circular arcs; the circular arcs have total length at most $2n\pi r$. For each segment in the polygonal part of $\partial C_g(r)$, we can bound its length by the sum of geodesic distances from its endpoints to $s$ (triangle inequality), which is no more than $2r$. There are at most $n$ segments in the polygonal portions of $\partial C_g(r)$, therefore their total length is no longer than $2nr$, implying $|\gamma| \le |\partial C_g(r)| + 2r = 2nr + 2n\pi r + 2r \le 9nr$. 

We initialize $L:= r$ and run the dynamic program with the following $\delta$ and $I$:
\begin{align*}
    \delta = \frac{\varepsilon_1 L}{16 + 8\sqrt{2}n},\qquad
    I = \frac{\varepsilon_1\varepsilon_2A}{2(n-3)\varepsilon_1 + (32 + 16\sqrt{2})n}.
\end{align*}
Then, set $L := 2L$, and repeat until $L\ge 9nr$. At some point, we must have $|\gamma| \le L \le 2|\gamma|$, which means the approximate tour $\gamma'$ returned by the dynamic program will be such that $V|\gamma'|\ge (1-\varepsilon_2)|V(\gamma)|$ and $|\gamma'|\le (1+\varepsilon_1)|\gamma|$. We return the shortest tour out of all tours that achieve the visibility area quota as we increase $r$. Since $r\le |\gamma| = O(nr)$, the number of iterations of the doubling search is $O(\log n)$.
The resulting theorem is as follows:

\old{SoCG reviewer 2:
What is also curious is that Theorem 12 is strictly stronger than
Theorem 7, and there is no explanation why the manuscript first
presents (in nearly full detail) the weaker Theorem 7 instead of
concentrating on the results of Section 3 that lead to the stronger
results (and solve both BWRP and QWRP in one go).}

\begin{theorem}
\label{thm:running_time_qwrp_dual}
    Given a starting point $s$, a dual approximation $\gamma'$ to an optimal solution $\gamma$ of the QWRP, with area quota $A$, in a simple polygon with $n$ vertices, satisfying $|V(\gamma')| \ge (1-\varepsilon_2)A$ and $|\gamma'| \le (1 + \varepsilon_1)|\gamma|$ for any $\varepsilon_1, \varepsilon_2 > 0$, can be computed in $O\left(\frac{n^5}{\varepsilon_1^5\varepsilon_2}\log n\right)$ time if $s\in\partial P$ or $O\left(\frac{n^9}{\varepsilon_1^9\varepsilon_2}\log n\right)$ time if $s\notin\partial P$\old{polynomial time (in $n, \frac1{\varepsilon_1}$ and $\frac1{\varepsilon_2}$)}.
\end{theorem}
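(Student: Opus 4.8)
The plan is to prove the theorem in two parts: first that the tour $\gamma'$ returned by the dynamic program of Section~\ref{sec:dualapproxqwr} meets both approximation guarantees, and then that it can be computed within the stated time bounds. For correctness I would combine the three structural facts already in hand. By Lemma~\ref{lem:structure} an optimal tour is relatively convex, so by Lemma~\ref{lem:snapping} its snapped version satisfies $|\gamma'| \le |\gamma| + (8+4\sqrt2)\delta n$ and $V(\gamma)\subseteq V(\gamma')$. Substituting $\delta = \frac{\varepsilon_1 L}{(16+8\sqrt2)n}$ gives $(8+4\sqrt2)\delta n = \tfrac12\varepsilon_1 L \le \varepsilon_1|\gamma|$ once the doubling search reaches a value with $L \le 2|\gamma|$, yielding the length bound $(1+\varepsilon_1)|\gamma|$. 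The DP over the recursion of Lemma~\ref{lem:opt_struct_quota} is correct by the principle of optimality, using Lemma~\ref{lem:opt_chain_seg} to make the incremental area term $|V(s_is_j)\setminus V(\pi_P(s,s_i))|$ depend only on the endpoint $s_i$, not on the interior of the chain. The only remaining loss is the area rounding, at most $I(2(n-3)+2L/\delta)$; plugging in the stated $I$ makes this cancel exactly to $\varepsilon_2 A$, so running the DP with quota $A-\varepsilon_2 A$ returns a tour seeing at least $(1-\varepsilon_2)A$ while the true optimum stays feasible for the rounded instance.

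For the running time with $s\in\partial P$, I would first bound the candidate set. Since $L/\delta = O(n/\varepsilon_1)$, the $L$-by-$L$ grid has $O(n^2/\varepsilon_1^2)$ pixels, and overlaying it with the $O(n)$-triangulation of $P$ produces $m := |S_{\delta,L}| = O(n^2/\varepsilon_1^2)$ turn points (the reflex-vertex duplicates add only $O(n)$). The number of area buckets is $K = O(A/I) = O(n/(\varepsilon_1\varepsilon_2))$. The DP has $mK$ subproblems, each a minimum over $O(m)$ predecessors, so the table fills in $O(m^2 K) = O(n^5/(\varepsilon_1^5\varepsilon_2))$ time once the per-transition data are available, and the $O(\log n)$ doubling iterations give the claimed $O\bigl(\frac{n^5}{\varepsilon_1^5\varepsilon_2}\log n\bigr)$.

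The step requiring the most care — and what I expect to be the main obstacle — is showing that the per-transition data, namely the incremental areas $|V(s_is_j)\setminus V(\pi_P(s,s_i))|$ and the relative-convexity test for $C(s_i,\cdot)\cup s_is_j$, can be prepared without dominating the DP. For each $i$ I would compute $\pi_P(s,s_i)$ and its weak visibility region once, and for each pair $(i,j)$ obtain $V(s_is_j)$ and the area of the set difference by an overlay, while the convexity test reduces to a local turn/tangency check under the precomputed geodesic angular order. The subtlety is that the visibility region of a geodesic path can have complexity $\omega(n)$, so I would argue that these $O(m^2)$ overlays still cost $O(m^2\,\mathrm{poly}(n))$ with a polynomial factor small enough to be absorbed below $m^2 K$ for the relevant regime of $\varepsilon_1,\varepsilon_2$.

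Finally, for the interior depot $s\notin\partial P$, the tour need not be relatively convex, but by the remark following Lemma~\ref{lem:structure} it becomes relatively convex after replacing its two depot-incident edges by a geodesic. I would therefore enumerate the two turn points adjacent to $s$ (an extra $O(m^2)$ factor), run the anchored DP between them, and close the tour through $s$, raising the cost to $O(m^4 K\log n) = O\bigl(\frac{n^9}{\varepsilon_1^9\varepsilon_2}\log n\bigr)$ and matching the second bound.
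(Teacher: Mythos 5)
Your overall plan coincides with the paper's: the same combination of Lemmas~\ref{lem:structure}, \ref{lem:snapping}, \ref{lem:opt_chain_seg} and \ref{lem:opt_struct_quota}, the same candidate count $m=O(n^2/\varepsilon_1^2)$, bucket count $O(n/(\varepsilon_1\varepsilon_2))$, the $O(m^2K)$ DP per doubling iteration, and the same $O(m^2)$ enumeration of the two depot-adjacent turn points for $s\notin\partial P$. The arithmetic for the length and area losses is also the paper's.

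The one genuine gap is exactly the step you flag as the main obstacle and then leave open: the cost of preparing the $O(m^2)$ incremental areas $|V(s_is_j)\setminus V(\pi_P(s,s_i))|$. Your fallback --- that an $O(m^2\,\mathrm{poly}(n))$ precomputation is ``absorbed below $m^2K$ for the relevant regime of $\varepsilon_1,\varepsilon_2$'' --- does not work as stated: $m^2K/m^2 = O(n/(\varepsilon_1\varepsilon_2))$, so the per-pair cost must be $O(n)$ up to the $\varepsilon$ factors; a per-pair cost of, say, $O(n^2)$ would give $O(n^6/\varepsilon_1^4)$, which is \emph{not} dominated by $O(n^5/(\varepsilon_1^5\varepsilon_2))$ for constant $\varepsilon_1,\varepsilon_2$. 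Moreover your premise that $V(\pi_P(s,s_i))$ ``can have complexity $\omega(n)$'' is false in a simple polygon: the weak visibility region of any connected set is $P$ with shadow pockets removed, each pocket's window chord adding one vertex and deleting at least one, so $V(\pi_P(s,s_i))$ has at most $n$ vertices (this is the same counting argument as in the proof of Lemma~\ref{lem:structure}). The paper exploits this to compute all the $V(\pi_P(s,s_i))$ incrementally down the shortest-path tree $\mathcal{T}$ --- merging $V$ of the edge to the parent into the parent's stored region in $O(n)$ per candidate --- and then obtains each pairwise difference in $O(n)$, for $O(n^5/\varepsilon_1^4)$ total, which is dominated by the DP. Supplying this $O(n)$ complexity bound and the tree-based incremental computation is what is needed to close your argument; without it the claimed running time is not established.
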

\begin{proof}
    We triangulate $P$ in $O(n)$ time, then compute all intersections between polygon edges, diagonals of the triangulation and horizontal/vertical segments of the square grid. We bound the number of candidate turn points by $|S_{\delta,L}| = O\left(n +\frac{n}{\varepsilon_1}\cdot\frac{n}{\varepsilon_1} + \frac{n}{\varepsilon_1}\cdot n + \frac{n}{\varepsilon_1}\cdot n\right) = O\left(\frac{n^2}{\varepsilon_1^2}\right)$. After sorting the candidates by geodesic angular order, we compute and store, for each candidate $s_i$, $V(\pi_P(s, s_i))$. This can be done in $O(n)$ per candidate point recursively. Start by computing and storing $V(ss_i)$ for every child $s_i$ of $s$ in $\mathcal{T}$. Then, for each internal node $s_j$, compute the visibility polygon of the segment between $s_j$ and its parent node $p(s_j)$ in $O(n)$ time, and merge it with $V(\pi_P(s, p(s_j))$ (which is a polygon with at most $n$ vertices) in $O(n)$ time. This allows us to compute $V(s_is_j)\setminus V(\pi_P(s,s_i))$, for all pairs of candidates $s_i, s_j$ that see each other, in $O\left(\frac{n^5}{\varepsilon_1^4}\right)$ time, which clearly dominates the running time of computing and sorting $S_{\delta,L}$. Computing a starting lower bound to $|\gamma|$ for the doubling search using the visibility wave method takes $O(n^2\log n)$ time.

    For each iteration of the doubling search, we run the dynamic programming algorithm in which a subproblem is defined by a candidate point and a quota value, hence there are $O\left(\frac{n^2}{\varepsilon_1^2}\right)\cdot O\left(\frac{n}{\varepsilon_1\varepsilon_2}\right) = O\left(\frac{n^3}{\varepsilon_1^3\varepsilon_2}\right)$ subproblems. Each subproblem can be solved by recursing through $O\left(\frac{n^2}{\varepsilon_1^2}\right)$ previously solved subproblems and computing the area of visibility according to the Bellman recursion which takes $O(1)$ for each since we have all $V(s_is_j)\setminus V(\pi_P(s,s_i))$ pre-computed. Thus the overall running time is $O\left(\frac{n^5}{\varepsilon_1^5\varepsilon_2}\log n\right)$.
    
    When $s$ is interior to $P$, recall that an optimal tour $\gamma = (s,w_1,w_2,\ldots,w_k,s)$ can be made relatively convex by replacing $sw_1$ and $w_ks$ with $\pi_P(w_1, w_k)$. Thus we add a factor of $O\left(\frac{n^4}{\varepsilon_1^4}\right)$ to the running time by iterating over all pairs $(w_1, w_k)$, and finding a relatively convex chain $C$ from $w_1$ to $w_k$ minimizing $|C| + |sw_1| + |sw_2|$ while $|V(C)\cup V(sw_1)\cup V(sw_2)| \ge A$.
\end{proof}

\section{The BWRP in a Simple Polygon}
\label{sec:bwrp}
\begin{theorem}
\label{thm:weakly-np-hard-bwrp}
The BWRP in a simple polygon is weakly NP-hard.
\end{theorem}
\begin{proof}
    Consider an instance of \textsc{Knapsack}: Given a budget $B$, items of integral weights $(l_1,\ldots,l_m)$, and values $(V_1,\ldots,V_m)$, we seek a combination of items having maximum total value such that their collective weight is at most~$B$.

     We employ the same construction in the proof of Theorem~\ref{thm:weakly-np-hard-qwrp} as shown in Figure~\ref{fig:floating-np-hard-gadget}. We give the watchman a budget of $ml_0 + B$. If we can compute a route of length $ml_0 + B$ with the maximum visibility area in polynomial time, then we can compute a combination of items of total weight no greater than $B$ maximizing the total value in polynomial time. Thus, the BWRP is NP-hard.
\end{proof}
\subsection{Approximation algorithm for anchored BWRP}
For a given budget length $B>0$, and any fixed $\varepsilon>0$, we compute a route of length at most $(1 + \varepsilon)B$ that sees at least as much area as an area-maximizing route $\gamma$ of length $B$.
\begin{claim}
    Without loss of generality, we can assume that $B$ is less than the length of an optimal watchman route for $P$; otherwise, the solution is simply an optimal WRP tour. Hence, an optimal budgeted watchman route $\gamma$ is necessarily the shortest watchman route of the subpolygon $V(\gamma)$, and so $\gamma$ is relatively convex (otherwise, we can shortcut $\gamma$ and expand the remaining length budget to see more area, contradicting the optimality of $\gamma$).
\end{claim}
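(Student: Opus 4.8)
The plan is to establish the three assertions in order: the reduction to the case $B < L^*$ (writing $L^*$ for the length of an optimal watchman route of $P$), then that an optimal budgeted route $\gamma$ is a \emph{shortest} watchman route of $P' = V(\gamma)$, and finally relative convexity. For the reduction I would use that any tour seeing all of $P$ is by definition a watchman route, hence of length at least $L^*$, while an optimal watchman route realizes length exactly $L^*$ and sees the maximum possible area $|P|$; therefore if $B \ge L^*$ the area-maximizing tour of length at most $B$ is simply an optimal WRP tour, and we may assume $B < L^*$. The consequence I would record at once is that $V(\gamma) \subsetneq P$ for every feasible $\gamma$: otherwise $\gamma$ would see all of $P$ and thus be a watchman route with $|\gamma|\le B<L^*$, contradicting the minimality of $L^*$.

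Next I would argue $\gamma$ is a shortest watchman route of the subpolygon $P'=V(\gamma)$. Since $\gamma$ sees every point of $P'$, it is some watchman route of $P'$. If it were not shortest, let $\tau$ be a strictly shorter one, so $V(\tau)\supseteq P'=V(\gamma)$ and $|\tau| < |\gamma| \le B$. Then $\tau$ is feasible and sees area at least $|P'|$, hence exactly $|P'|$ by the area-maximality of $\gamma$, yet it now carries strictly positive leftover budget $B-|\tau|>0$. Using $P'\subsetneq P$, I would perturb $\tau$ by a short detour toward a shadow pocket to reveal strictly more area within the freed budget, contradicting the optimality of $\gamma$; hence $\gamma$ is already a shortest watchman route of $P'$.

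With $\gamma$ identified as a shortest watchman route of the simple polygon $P'$, relative convexity is immediate from the structural facts already in hand (a shortest watchman route of a simple polygon is relatively convex, with reflex vertices among those of the polygon), which is exactly the content exploited in Lemma~\ref{lem:structure}. Alternatively I would give the self-contained parenthetical argument: were $\gamma$ not relatively convex, replacing it by the boundary $\gamma'$ of its relative convex hull gives $|\gamma'|<|\gamma|$ strictly (a non-relatively-convex cycle has strictly larger perimeter than its relative hull, by the minimum-perimeter characterization), while $\gamma'$ encloses $\gamma$ and hence $V(\gamma')\supseteq V(\gamma)$ by the same enclosure monotonicity used in Lemma~\ref{lem:snapping}; the strictly freed budget then again lets us see more, a contradiction.

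The step I expect to be the main obstacle is the \emph{extend-to-see-strictly-more} move, i.e. certifying that a feasible tour with positive leftover budget that does not yet see all of $P$ can always be deformed to gain a positive amount of area. The clean way is to pick a window edge on $\partial P'$ emanating from a reflex vertex $r$ of $P$; there is a tour point $y_0$ collinear with $r$ along this window, and nudging $y_0$ a distance $O(t)$ across the supporting line of the window sweeps a triangular sliver of previously unseen pocket of area $\Theta(t)$, so any positive budget $t$ yields a strict gain, provided the perturbed point stays in $P$ and relative convexity is preserved — the only genuinely delicate bookkeeping. I would note, however, that for the subsequent dynamic program this subtlety can be bypassed entirely: shortcutting an optimal $\gamma$ produces $\gamma'$ with $|\gamma'|\le|\gamma|\le B$ and $V(\gamma')\supseteq V(\gamma)$, so $\gamma'$ is itself optimal and relatively convex, which already establishes the existence of an optimal relatively convex route.
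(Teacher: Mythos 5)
Your proposal is correct and follows essentially the same route as the paper, whose entire justification is the parenthetical shortcut-and-expand argument: reduce to $B$ below the optimal watchman-route length, note $V(\gamma)\subsetneq P$, and derive relative convexity by observing that a non-relatively-convex $\gamma$ could be shortcut and the freed budget spent to see strictly more area. The one place you go beyond the paper --- making explicit that a positive leftover budget yields a \emph{strictly} positive area gain by nudging a tangent point across a window chord at a reflex vertex, and offering the weaker ``some optimal route is relatively convex'' fallback that suffices for the dynamic program --- is a legitimate tightening of a step the paper leaves implicit, not a different approach.
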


Since it suffices to consider only relatively convex routes for the BWRP, the observation in Lemma \ref{lem:opt_chain_seg} allows us to prove the structure of an optimal BWRP tour.

\begin{lemma}
\label{lem:opt_struct_budget}
    $C$ is a relatively convex polygonal chain from $s$ to $s_j$ of length at most $\overline{B}$ that sees the largest area possible if and only if $C_{s_i}$ is a relatively convex polygonal chain from $s$ to $s_i$ of length at most $\overline{B} - |s_is_j|$ that sees the largest area possible.
\end{lemma}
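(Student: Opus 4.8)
The plan is to prove this optimal-substructure lemma as a clean analogue of Lemma~\ref{lem:opt_struct_quota}, but with the roles of length and area interchanged: instead of fixing an area quota and minimizing length, we now fix a length budget $\overline{B}$ and maximize the area seen. The key simplification, relative to the quota case, is that the length of the chain $C$ decomposes additively and \emph{locally} along the final edge, namely $|C| = |C_{s_i}| + |s_is_j|$, so the budget constraint $|C|\le \overline{B}$ is equivalent to the constraint $|C_{s_i}|\le \overline{B} - |s_is_j|$ on the prefix chain. This is why the length term appearing in the statement is simply $\overline{B} - |s_is_j|$, with no visibility correction, in contrast to the quota version.

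First I would establish the forward direction. Suppose $C$ is a relatively convex chain from $s$ to $s_j$ of length at most $\overline{B}$ seeing the maximum possible area among all such chains, and let $s_i$ be the vertex immediately preceding $s_j$. Write the area seen as $|V(C)| = |V(C_{s_i})| + |V(s_is_j)\setminus V(C_{s_i})|$, a disjoint decomposition. By Lemma~\ref{lem:opt_chain_seg}, we have $V(s_is_j)\setminus V(C_{s_i}) = V(s_is_j)\setminus V(\pi_P(s,s_i))$, so the second term is \emph{independent} of which relatively convex prefix chain from $s$ to $s_i$ is used. Consequently, maximizing $|V(C)|$ over all valid $C$ ending in the edge $s_is_j$ is exactly equivalent to maximizing $|V(C_{s_i})|$ over relatively convex prefix chains from $s$ to $s_i$ subject to $|C_{s_i}|\le \overline{B} - |s_is_j|$. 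If $C_{s_i}$ were not area-maximal under that budget, replacing it by a better prefix (which remains compatible since appending the fixed edge $s_is_j$ keeps length and relative convexity controlled) would strictly increase $|V(C)|$ without violating the budget, contradicting the optimality of $C$. The reverse direction is the symmetric argument: given an area-maximal prefix $C_{s_i}$ of length at most $\overline{B}-|s_is_j|$, appending $s_is_j$ yields a chain of length at most $\overline{B}$ whose area equals the (fixed) offset plus the maximized prefix area, hence is itself area-maximal.

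The main subtlety I expect is not the area bookkeeping but the \emph{relative convexity} of the concatenation $C_{s_i}\cup s_is_j$: one must ensure that prepending or appending the edge $s_is_j$ to an optimal prefix does not destroy relative convexity, and conversely that an optimal $C$ genuinely decomposes into a relatively convex prefix. Here I would invoke the ordering machinery from Section~\ref{sec:dualapproxqwr} (the geodesic angular order, under which a clockwise relatively convex chain has strictly increasing indices) exactly as in the quota case, together with the shortcutting observation noted after the recursion: if concatenation fails to be relatively convex, one can connect $s_j$ to the nearest reflex vertex or point of tangency, which only shortens the chain and, by Lemma~\ref{lem:visgeopath}, never decreases the area seen. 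This guarantees that an area-maximal chain of the required form always exists, so the equivalence in the lemma is realized by some optimal substructure, which is all the dynamic program requires.
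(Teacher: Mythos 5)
Your proposal is correct and follows essentially the same route as the paper, which simply states that the proof is identical to that of Lemma~\ref{lem:opt_struct_quota}: you decompose $|V(C)|$ into $|V(C_{s_i})|$ plus the prefix-independent increment $|V(s_is_j)\setminus V(\pi_P(s,s_i))|$ via Lemma~\ref{lem:opt_chain_seg}, observe that the budget constraint localizes additively to $|C_{s_i}|\le \overline{B}-|s_is_j|$, and conclude by an exchange argument. Your additional remark on preserving relative convexity via the shortcutting observation matches how the paper handles that issue at the level of the dynamic-programming recursion.
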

\begin{proof}
    Recall that in Lemma~\ref{lem:opt_struct_quota}, we proved that $|V(C)| = |V(C_{s_i})| + |V(s_is_j)\setminus V(\pi_P(s, s_i))|$. 
    This implies $|V(C_{s_i})|$ must be the maximum among all chains of length at most $\overline{B} - |s_is_j|$, since the existence of a chain that achieves more area of visibility contradicts the optimality of $|V(C)|$, and vice versa.
\end{proof}

We decompose $P$ by overlaying a triangulation and regular square grid of $\delta$-sized pixels within an axis-aligned square of size $B$-by-$B$, centered on $s$, then sort the set of candidates $S_{\delta,B}$ according to the geodesic angular order defined for the QWRP. Similarly, there exists a route $\gamma'$ of length at most $|\gamma| + (8 + 4\sqrt{2})\delta n$ with vertices in $S_{\delta,B}$.

Let $a(s_j, \overline{B})$ be the optimal area that a relatively convex polygonal chain from $s$ to $s_j$ that is no longer than $\overline{B}$ can see; and let $C(s_j, \overline{B})$ be the associated optimal chain of the subproblem $(s_j, \overline{B})$. Initialize $a(s, 0) = |V(s)|$. The Bellman recursion for each subproblem with $j = 1, 2, \ldots, m$ and all values of $\overline{B}$ would be given as follows, for all $\overline{i} < j$ such that $s_j$ sees $s_{\overline{i}}$ and $C(s_{\overline{i}}, \overline{B} - |s_{\overline{i}}s_j|)\cup s_{\overline{i}}s_j$ is relatively convex
    \begin{align*}
    \begin{split}
    i&=\underset{\overline{i}}{\argmax}\left\{a(s_{\overline{i}}, \overline{B} - |s_{\overline{i}}s_j|) + |V(s_{\overline{i}}s_j) \setminus V(\pi_P(s, s_{\overline{i}}))|\right\}\\
        a(s_j, \overline{B}) &= a(s_i, \overline{B} - |s_is_j|) + |V(s_is_j) \setminus V(\pi_P(s, s_i))|,\\
        C(s_j, \overline{B}) &= C(s_i, \overline{B} - |s_is_j|) \cup s_is_j.
        \end{split}
    \end{align*}
    Then, return $\gamma' := C(s_m, B + (8+4\sqrt{2})\delta n)$. 

    To bound the number of subproblems, we consider a partition of an interval of length $B + (8+4\sqrt{2})\delta n$ into uniform intervals, and round up the length of any segment $s_is_j$ to the nearest interval endpoint. Let $I$ be the length of each interval, we run the algorithm on a new instance with budget $B + (8+4\sqrt{2})\delta n + \left(2(n-3) + \frac{2B}{\delta}\right)I$ and the subproblems defined by intervals' endpoints. The optimal solution of the original instance is a feasible solution of the new instance; thus, we find a route seeing as much as the optimal route of the original instance. The values of $\delta$ and $I$ can be set as follows:
\begin{align*}
    \delta = \frac{\varepsilon B}{(16 + 8\sqrt{2})n},\qquad
    I = \frac{\varepsilon^2 B}{4(n-3)\varepsilon + (64 + 32\sqrt{2})n},
\end{align*}
so that $B + (8+4\sqrt{2})\delta n + \left(2(n-3) + \frac{2B}{\delta}\right)I \le (1+\varepsilon)B$.

\begin{theorem}
\label{thm:running_time_bwrp}
    Given a starting point $s$, a route of length at most $(1 + \varepsilon)B$ seeing at least as much area as is seen by an optimal route of length $B$ for the BWRP in a simple polygon  with $n$ vertices can be computed in $O\left(\frac{n^5}{\varepsilon^6}\right)$ time if $s\in\partial P$ or $O\left(\frac{n^9}{\varepsilon^{10}}\right)$ time if $s\notin\partial P$.
\end{theorem}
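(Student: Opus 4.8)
The plan is to charge the running time to a product of four quantities—the number of candidate turn points $m=|S_{\delta,B}|$, the number of budget buckets, the number of predecessors examined per transition, and the per-transition cost—and to verify that the auxiliary preprocessing (visibility and incremental-area tables) does not dominate. First I would count the candidates. With $\delta=\varepsilon B/((16+8\sqrt{2})n)$ the $B$-by-$B$ grid has $O(B/\delta)=O(n/\varepsilon)$ lines per axis, so its overlay with the $O(n)$-face triangulation of $P$ has complexity $O((B/\delta)^2+n\cdot(B/\delta)+n)=O(n^2/\varepsilon^2)$; the $O(n)$ extra copies $s_{r_i}$ added for reflex vertices do not change this, so $m=O(n^2/\varepsilon^2)$. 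Next I would count buckets: the working budget is at most $(1+\varepsilon)B=O(B)$ and each bucket has width $I=\Theta(\varepsilon^2 B/n)$, giving $O(B/I)=O(n/\varepsilon^2)$ buckets. Since a subproblem is a pair $(s_j,\overline{B})$, there are $O(m\cdot B/I)=O(n^3/\varepsilon^4)$ subproblems.

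I would then analyze the transition. The recursion for $(s_j,\overline{B})$ maximizes over predecessors $s_i$ with $i<j$, with $s_i$ visible from $s_j$, and with a relatively convex continuation; there are $O(m)=O(n^2/\varepsilon^2)$ candidate predecessors, and each transition is $O(1)$ once three things are available by table lookup: (a) pairwise visibility of candidates, (b) the incremental area $|V(s_is_j)\setminus V(\pi_P(s,s_i))|$, and (c) enough state to test relative convexity, for which it suffices to store the last edge of the stored optimal chain $C(s_i,\overline{B})$ and to invoke the shortcutting remark guaranteeing that a convex continuation always exists. Multiplying, the dynamic program costs $O(n^3/\varepsilon^4)\cdot O(n^2/\varepsilon^2)=O(n^5/\varepsilon^6)$, which is exactly the bound claimed for $s\in\partial P$.

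The step I expect to be the main obstacle is ensuring that precomputing the incremental areas in (b) stays within $O(n^5/\varepsilon^6)$. Here I would fix $s_i$, compute the visibility region $V(\pi_P(s,s_i))$ of the geodesic once (an $O(n)$-complexity simple region), and then for each of the $m$ choices of $s_j$ compute $V(s_is_j)$ in $O(n)$ time and the area of its set difference with $V(\pi_P(s,s_i))$; by Lemma~\ref{lem:opt_chain_seg} this difference is precisely the newly seen area. The delicate point is that two simple visibility regions of size $O(n)$ could in principle overlay in $\Omega(n^2)$ complexity, which would break the bound, so I would exploit that both regions are star-shaped/weakly visible from a segment and a geodesic and argue that the difference area is extractable in $O(n)$ time, for a total of $O(m^2 n)=O(n^5/\varepsilon^4)$, dominated by the DP; all-pairs candidate visibility and the shortest-path tree at $s$ are likewise lower order. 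Finally, for the floating case $s\notin\partial P$, I would use the structural remark that replacing the two tour edges incident to $s$ by a geodesic yields a relatively convex cycle: enumerating the $O(m^2)=O(n^4/\varepsilon^4)$ ordered pairs of candidate tour-neighbors of $s$, charging $|sw_1|+|w_ks|$ to the budget and the visibility of those two edges to the seen area, and solving the boundary-anchored program for each pair, gives $O(m^2)\cdot O(n^5/\varepsilon^6)=O(n^9/\varepsilon^{10})$.
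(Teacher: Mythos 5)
Your proposal matches the paper's own analysis essentially step for step: the same count of $O(n^2/\varepsilon^2)$ candidates and $O(n/\varepsilon^2)$ budget buckets giving $O(n^3/\varepsilon^4)$ subproblems with $O(n^2/\varepsilon^2)$-time transitions, the same $O(n)$-per-pair precomputation of the incremental areas $|V(s_is_j)\setminus V(\pi_P(s,s_i))|$ via the geodesic shortest-path tree, and the same handling of interior $s$ by enumerating the $O(n^4/\varepsilon^4)$ pairs of tour-neighbors of $s$ to restore relative convexity. The only differences are cosmetic (you flag the overlay-complexity concern in the area precomputation more explicitly than the paper does, and you call the interior-$s$ case ``floating,'' which the paper reserves for the no-depot setting), so the argument is correct and essentially identical to the paper's.
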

\begin{proof}
    The analysis of the running time of the dynamic program for the BWRP can be done almost identically to that of the dual approximation algorithm for the QWRP. A subproblem is defined by a candidate point and a budget value, resulting in $O\left(\frac{n^3}{\varepsilon^4}\right)$ subproblems in total, each of which can be solved in $O\left(\frac{n^2}{\varepsilon^2}\right)$ time. The doubling search is not required, since the budget value gives the size of the bounding box of the route. We obtain a running time of $O\left(\frac{n^5}{\varepsilon^6}\right)$ for the dynamic program for the BWRP if a starting point $s$ is given on the boundary, and $O\left(\frac{n^9}{\varepsilon^{10}}\right)$ if $s$ is interior to $P$.
\end{proof}
\subsection{From anchored BWRP to an FPTAS for anchored QWRP}
We can adapt the algorithm for the anchored BWRP above to obtain an FPTAS for the anchored QWRP. Let $\gamma$ be an optimal QWRP tour, suppose we have some $L$ such that $|\gamma| \le L \le 2|\gamma|$. We divide $L$ into $\displaystyle\ceil{\frac{2}{\varepsilon}}$ uniform intervals, each of length no greater than $\varepsilon|\gamma|$. The smallest interval endpoint $L'$ that is no smaller than $|\gamma|$, is a $(1 + \varepsilon)$-approximation to $|\gamma|$. We can iterate through interval endpoints as the budget constraint and use the approximation algorithm for the BWRP to compute a route of length at most $(1 + \varepsilon)L'$ that sees as much as an area-maximizing route of length $L'$ does, which sees more area than does~$\gamma$. The result is the following theorem, which, in contrast with the earlier Theorem~\ref{thm:running_time_qwrp_dual}, is not a dual approximation (allowing for a relaxation of the quota constraint), but an FPTAS for optimizing the length, subject to a hard constraint on the area seen. The running time of the algorithm in the dual approximation of Theorem~\ref{thm:running_time_qwrp_dual}, however, is better than that of the FPTAS, so it may be preferred in some settings.

\old{SoCG reviewer 2:
What is also curious is that Theorem 12 is strictly stronger than
Theorem 7, and there is no explanation why the manuscript first
presents (in nearly full detail) the weaker Theorem 7 instead of
concentrating on the results of Section 3 that lead to the stronger
results (and solve both BWRP and QWRP in one go).}

\begin{theorem}
\label{thm:fptas_qwrp} Given a starting point $s$, an approximation $\gamma'$ to an optimal solution $\gamma$ of the QWRP, with area quota $A$, in a simple polygon with $n$ vertices, satisfying $|V(\gamma')| \ge A$ and $|\gamma'| \le (1 + \varepsilon)|\gamma|$ for any $\varepsilon > 0$, can be computed in $O\left(\frac{n^5}{\varepsilon^6}\log n\log\frac{1}{\varepsilon}\right)$ if $s\in\partial P$ or $O\left(\frac{n^9}{\varepsilon^{10}}\log n\log\frac{1}{\varepsilon}\right)$ if $s\notin\partial P$.
\end{theorem}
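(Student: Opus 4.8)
The plan is to treat the FPTAS as a thin wrapper around the BWRP approximation of Theorem~\ref{thm:running_time_bwrp}, converting the hard area quota into a hard length target by guessing $|\gamma|$ to within a $(1+\varepsilon)$ factor. I would first reuse the bounds already established for the dual approximation: $r_{min}$ lower-bounds $|\gamma|$, and $|C_g(r_{min})|+2r_{min}=O(n\,r_{min})$ upper-bounds it, so these bracket $|\gamma|$ within a factor $O(n)$. A doubling search $L=r_{min},2r_{min},4r_{min},\dots$ thus reaches a value in $[\,|\gamma|,2|\gamma|\,]$ within $O(\log n)$ steps. For the current $L$ I subdivide $(0,L]$ into $\lceil 2/\varepsilon\rceil$ equal intervals; once $L\le 2|\gamma|$ each interval has length at most $\varepsilon|\gamma|$, and I take the interval endpoints as the candidate budgets $L'$.

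The correctness argument is the core of the reduction. For any endpoint $L'\ge|\gamma|$ the optimal tour $\gamma$ is itself a route of length at most $L'$, so an area-maximizing route of length $L'$ sees area at least $|V(\gamma)|\ge A$; by Theorem~\ref{thm:running_time_bwrp} the BWRP approximation at budget $L'$ returns a route seeing at least this much---hence at least $A$---of length at most $(1+\varepsilon)L'$. Since the value computed by the BWRP subroutine is monotone nondecreasing in its budget, the predicate ``the returned route sees area $\ge A$'' is monotone in $L'$, so I binary search over the $O(1/\varepsilon)$ endpoints for the smallest feasible one $\hat L$ using $O(\log\frac1\varepsilon)$ BWRP calls; the doubling search stops at the first $L$ for which such an $\hat L$ exists, which happens by the time $L\in[\,|\gamma|,2|\gamma|\,]$. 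Writing $L'^\ast$ for the smallest endpoint that is $\ge|\gamma|$, it is feasible and satisfies $L'^\ast<(1+\varepsilon)|\gamma|$ (the previous endpoint lies below $|\gamma|$ and intervals have length $\le\varepsilon|\gamma|$), whence $\hat L\le L'^\ast$ and the route returned has length at most $(1+\varepsilon)\hat L<(1+\varepsilon)^2|\gamma|$ while seeing area $\ge A$. Substituting $\varepsilon/3$ for $\varepsilon$ throughout makes $(1+\varepsilon)^2\le 1+\varepsilon$, delivering $|V(\gamma')|\ge A$ and $|\gamma'|\le(1+\varepsilon)|\gamma|$ with only a constant change in the $\varepsilon$-dependence.

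For the running time I multiply the three nested costs: $O(\log n)$ doubling iterations, $O(\log\frac1\varepsilon)$ BWRP calls per iteration from the binary search over budgets, and the per-call cost of Theorem~\ref{thm:running_time_bwrp}, namely $O(n^5/\varepsilon^6)$ for $s\in\partial P$ and $O(n^9/\varepsilon^{10})$ for $s\notin\partial P$. This yields $O\!\left(\frac{n^5}{\varepsilon^6}\log n\log\frac1\varepsilon\right)$ and $O\!\left(\frac{n^9}{\varepsilon^{10}}\log n\log\frac1\varepsilon\right)$ respectively, the $\varepsilon/3$ rescaling affecting only constants. The interior-depot case inherits the larger polynomial directly from Theorem~\ref{thm:running_time_bwrp}, where the geodesic-completion trick (replacing the two edges incident to $s$ by $\pi_P(w_1,w_k)$) restores relative convexity.

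The step I expect to be the main obstacle is arguing that the quota stays a hard constraint---$|V(\gamma')|\ge A$ exactly---even though the BWRP subroutine overshoots its length budget by a factor $(1+\varepsilon)$. This hinges on two points that must be stated precisely: that over-budgeting to any $L'\ge|\gamma|$ forces the area-maximizing tour to dominate $\gamma$, so the approximate output is genuinely feasible rather than only approximately so; and that feasibility is monotone in the budget, which both validates the binary search and guarantees the smallest feasible endpoint is at most $(1+\varepsilon)|\gamma|$. With these in hand, the remainder is the routine bookkeeping of the $\varepsilon$-rescaling and the multiplication of loop counts.
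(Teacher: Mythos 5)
Your proposal matches the paper's own proof essentially step for step: lower-bounding $|\gamma|$ by $r_{min}$, doubling search over $O(\log n)$ scales, subdividing the budget range into $\lceil 2/\varepsilon\rceil$ endpoints, binary searching for the smallest endpoint at which the BWRP subroutine of Theorem~\ref{thm:running_time_bwrp} returns a tour seeing area at least $A$, and rescaling $\varepsilon$ to absorb the $(1+\varepsilon)^2$ factor, with the identical running-time accounting. The monotonicity point you flag is handled the same (implicit) way in the paper; note only that full monotonicity of the approximate predicate is not actually needed---it suffices that the predicate is guaranteed true for every endpoint at least $|\gamma|$, so the binary search necessarily terminates at an endpoint no larger than $L'^{\ast}$.
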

\begin{proof}
    We first obtain a lower bound on $|\gamma|$ by computing the minimum radius $r_{\min}$ of a geodesic ball seeing at least the area quota $A$; let $L:=r_{\min}$. Perform a binary search on the values $\displaystyle 0, \frac{L}{\ceil{\frac{2}{\varepsilon}}}, \frac{2L}{\ceil{\frac{2}{\varepsilon}}}, \ldots, L$, as choices for the input budget to the algorithm for the anchored BWRP with a budget stretch of $(1 + \varepsilon)$. In $O\left(\frac{n^5}{\varepsilon^6}\log\frac{1}{\varepsilon}\right)$ time if the starting point $s$ is on $\partial P$ or $O\left(\frac{n^9}{\varepsilon^{10}}\log\frac{1}{\varepsilon}\right)$ if $s$ is interior to $P$, we can find a budget value for which the algorithm returns a tour with visibility area greater than the quota $A$, if there is one. If no such value exists, set $L:=2L$ and repeat the binary search. Ultimately, we find a $(1 + \varepsilon)^2$-approximation to the QWRP. Let $2\varepsilon + \varepsilon^2 = \varepsilon'$, then $\frac{1}{\varepsilon}= \Theta\left(\frac{1}{\varepsilon'}\right)$ as $\varepsilon$ and $\varepsilon'$ approach 0. Thus in total running time of $O\left(\frac{n^5}{\varepsilon'^6}\log n\log\frac{1}{\varepsilon'}\right)$ or $O\left(\frac{n^9}{\varepsilon'^{10}}\log n\log\frac{1}{\varepsilon'}\right)$ depending on the location of $s$, we can compute a $(1 + \varepsilon')$-approximate solution to the QWRP.
\end{proof}

\section{Floating QWRP and BWRP}
\label{sect:floatin}
When the starting point $s$ of the tour is not specified (the so called ``floating'' case), the WRP tends to be trickier: known algorithms for the floating WRP are $O(n)$-factor slower than in the non-floating case \cite{first,carlsson1999finding,chin1991shortest,corrigendum,dror2003touring,ntafos1994optimum,tan2001fast}. If the optimal tour is not convex (but only relatively convex), one can iterate through all reflex vertices of $P$ as choices for $s$, and thus reduce the floating version to the basic WRP; the same can be done for QWRP and BWRP. Thus, the remaining challenge is to find the shortest (strictly, not just relatively) convex tour. 

\begin{figure}[h!]\centering\includegraphics[page=4]{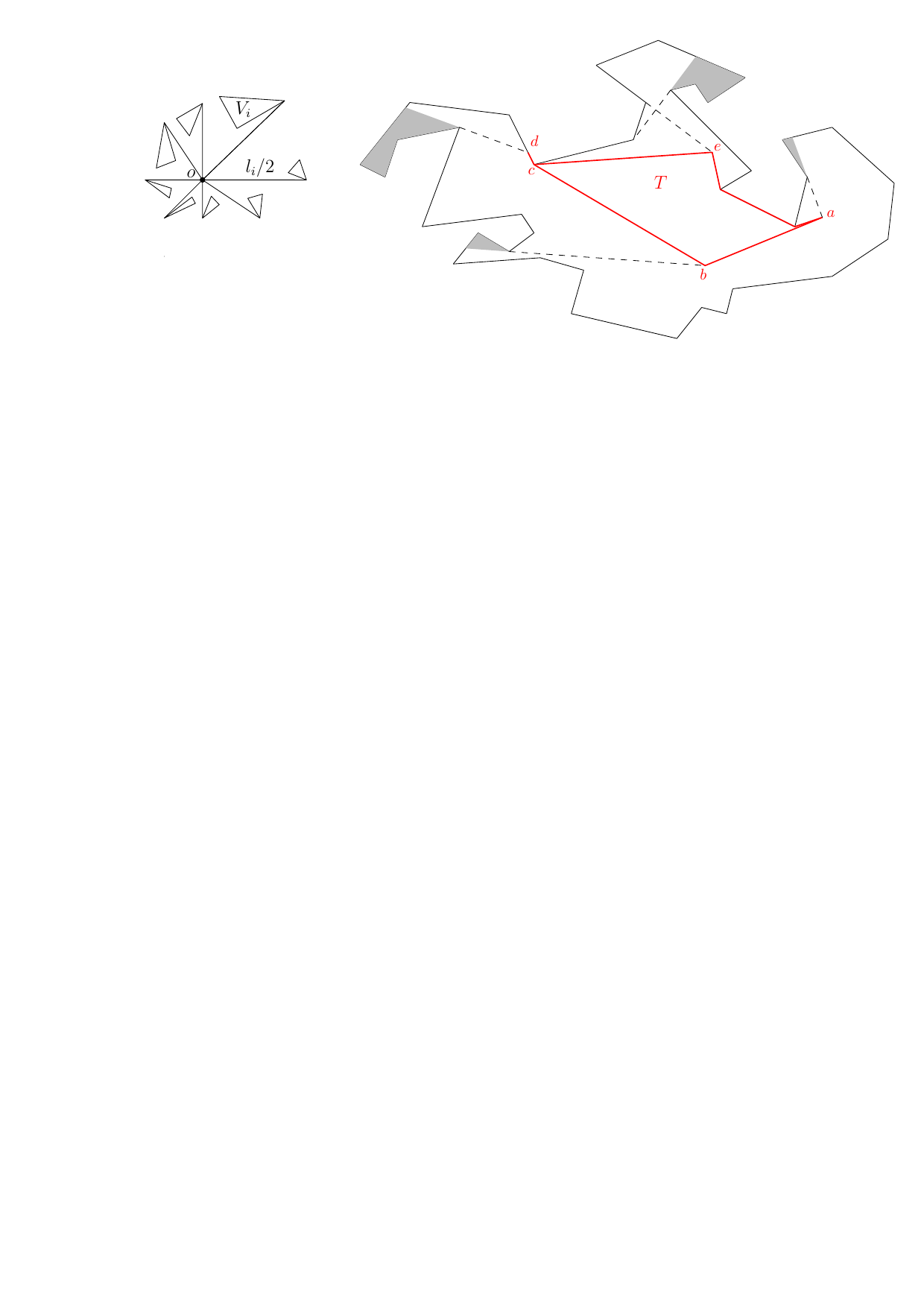}\caption{What is seen from the interior of a segment does not change as the segment is moved locally: whatever was hidden, but becomes seen from an interior point (red), was seen by a neighboring point (black) before the move.}\label{fig:floating}\end{figure}

Any convex polygon can be outer-approximated by a convex polygon with a constant number of vertices: Dudley's approximation \cite{bronshteyn1975approximation,dudley1974metric,har2019proof} implies that for any length-$L$ tour $\gamma$, there is a length-$(1+\varepsilon)L$ tour $\gamma'$ with $O\left(\frac{1}{\sqrt\varepsilon}\right)$ vertices that sees at least as much area as $\gamma$ does. 
To find $\gamma'$ (either for QWRP or BWRP), we use the techniques from \cite{ntafos1994optimum}: For each of the $O(n^4)$ cells of the visibility decomposition $D$ (the visibility graph $G_v$ of $P$ as defined Section~\ref{sec:dualapproxqwr}, with maximally extended edges), points within the cell have visibility polygons that are combinatorially equivalent, implying that the area seen by any point in the cell is given by the same formula. Moreover, if each vertex of a tour sits within a fixed cell of $D$ and each edge of the tour passes through the same set of cells, the total area seen from the tour is given by the same formula: the interiors of the edges do not add to the area seen, so the total seen area is a function, $f(v_1,\dots,v_k)$, of only the positions $v_i$ of the tour's $k=O\left(\frac{1}{\sqrt\varepsilon}\right)$ vertices (Fig.~\ref{fig:floating}). 
We further decompose the cells of $D$ by lines through all of $D$'s vertices. If the vertices of the tour are in the same cells of this refined $O(n^8)$-complex decomposition $D'$, then the edges of the tour pass through the same cells of $D$. We iterate through all $O(n^{8k})$ placements of vertices of the tour into cells of $D'$. For each placement, finding $\vec v$ maximizing the seen area, $f(\vec v)$, amounts to solving an $O(k)$-sized system of polynomial equations having $O(k)$ algebraic degree (the Lagrangian of the problem will contain the constraint that the tour's length is $L$, consisting of $k$ terms and will have to be squared $O(k)$ times before becoming a polynomial). The solution can be found in $k^{O(k)}$ time \cite[Section~3.4]{canny1988complexity}.
We summarize in the following theorems:


\begin{theorem}
Let $\gamma$ be an optimal QWRP (no starting point) tour. In $n^{O\left(\frac{1}{\sqrt\varepsilon}\right)}$ time, a tour of length at most $(1+\varepsilon)|\gamma|$ can be found that sees at least as much area as does $\gamma$.
\end{theorem}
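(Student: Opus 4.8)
The plan is to assemble the two algorithmic engines that precede the statement into a single algorithm driven by the structural dichotomy of Lemma~\ref{lem:structure}. Since there is no depot, an optimal floating tour $\gamma$ is relatively convex and, by Lemma~\ref{lem:structure}, every reflex vertex of $\gamma$ coincides with a reflex vertex of $P$. I would therefore split the analysis according to whether $\gamma$ is \emph{strictly} convex, run a different procedure in each case, and return the best feasible tour found; since we do not know a priori which case holds, both procedures are executed and the shorter feasible output is kept.

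First, suppose $\gamma$ is relatively convex but not strictly convex. Then $\gamma$ has a reflex vertex $r$, which is a reflex vertex of $P$ lying on $\partial P$, and $\gamma$ passes through $r$; hence $\gamma$ is a feasible \emph{anchored} QWRP tour with depot $s=r$. I would iterate over all $O(n)$ reflex vertices of $P$ as candidate depots and run the anchored FPTAS of Theorem~\ref{thm:fptas_qwrp} (using the faster $s\in\partial P$ branch) for each. For the correct choice $s=r$, the optimal anchored tour through $r$ has length at most $|\gamma|$, so the returned tour has length at most $(1+\varepsilon)|\gamma|$ while meeting the area quota $A$; this branch runs in time polynomial in $n$ and $1/\varepsilon$, which is dominated by the cost of the second case.

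The remaining, and main, case is that $\gamma$ is strictly convex; this is where the work lies. Here I invoke Dudley's approximation: there is a tour $\gamma'$ of length at most $(1+\varepsilon)|\gamma|$ with only $k=O\!\left(\tfrac{1}{\sqrt\varepsilon}\right)$ vertices that sees at least as much area as $\gamma$. To compute such a $\gamma'$, I would build the visibility decomposition $D$ (the arrangement of the maximally extended edges of $G_v$, with $O(n^4)$ cells), within each cell of which every point has a combinatorially fixed visibility polygon, and then refine $D$ to $D'$ by adding the lines through all vertices of $D$, obtaining $O(n^8)$ cells. The purpose of the refinement is that once the $k$ tour vertices are pinned to fixed cells of $D'$, each tour edge crosses a fixed sequence of cells of $D$; by the observation illustrated in Figure~\ref{fig:floating}, the interiors of the edges never reveal new area, so the total seen area is a single fixed rational function $f(v_1,\dots,v_k)$ of the vertex coordinates alone. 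I would then enumerate all $O(n^{8k})$ assignments of the $k$ vertices to cells of $D'$.

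For each assignment I would solve the constrained optimization of minimizing $\sum_i |v_i v_{i+1}|$ subject to $f(v_1,\dots,v_k)\ge A$ and to each $v_i$ remaining in its assigned cell. Writing the Lagrangian and clearing the $k$ square-root length terms by squaring $O(k)$ times yields a system of $O(k)$ polynomial equations of $O(k)$ algebraic degree, solvable in $k^{O(k)}$ time \cite[Section~3.4]{canny1988complexity}. Taking the best feasible solution over all assignments, and over both cases, produces a tour of length at most $(1+\varepsilon)|\gamma|$ seeing area at least that of $\gamma$. The total running time is $O(n^{8k})\cdot k^{O(k)}=n^{O(1/\sqrt\varepsilon)}$, since $k=O\!\left(\tfrac{1}{\sqrt\varepsilon}\right)$ and, for fixed $\varepsilon$, the $k^{O(k)}$ factor is a constant. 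I expect the main obstacle to be the strictly convex case: specifically, justifying that fixing the combinatorial type — the cells of $D'$ containing the vertices — really does pin down the seen area as one closed-form function, i.e.\ that the $D\to D'$ refinement controls every edge's cell crossings and that edge interiors contribute nothing, and then correctly setting up and bounding the algebraic degree of the resulting polynomial system so that Canny's root-finding applies within the stated time.
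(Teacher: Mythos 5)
Your proposal matches the paper's own argument essentially step for step: reduce the non‑strictly‑convex case to the anchored FPTAS by trying each reflex vertex of $P$ as a depot, and handle the strictly convex case via Dudley's $O\bigl(1/\sqrt{\varepsilon}\bigr)$-vertex approximation, the visibility decomposition $D$ refined to $D'$, enumeration of the $O(n^{8k})$ cell assignments, and Canny's algorithm on the Lagrangian system after clearing square roots. The only (immaterial) difference is that you phrase the per-assignment optimization as minimizing length subject to an area constraint rather than maximizing area subject to a length constraint; both lead to the same polynomial system and the same $n^{O(1/\sqrt{\varepsilon})}$ bound.
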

\begin{theorem}
In $n^{O\left(\frac{1}{\sqrt\varepsilon}\right)}$ time, a BWRP (no starting point) tour of length $(1+\varepsilon)L$ can be found that sees at least as much area as does any tour of length $L$.
\end{theorem}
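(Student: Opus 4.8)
The plan is to reduce the floating BWRP to an optimization over convex tours with a constant number of vertices, and then to solve that optimization by a cell-by-cell enumeration combined with algebraic optimization. First I would dispose of the easy case: by the preceding Claim, an optimal BWRP tour is relatively convex, and if it is relatively but not strictly convex then it touches a reflex vertex of $P$, which we may take as an anchor. Iterating over all $O(n)$ reflex vertices as the depot $s$ and invoking the anchored BWRP algorithm of Theorem~\ref{thm:running_time_bwrp} handles this case within the claimed time bound, so the only remaining task is to approximate the shortest \emph{strictly} convex tour seeing a prescribed amount of area.

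For the strictly convex case, I would invoke Dudley's outer approximation: for the optimal length-$L$ tour $\gamma$ there is a convex tour $\gamma'$ of length at most $(1+\varepsilon)L$ with $k=O(1/\sqrt{\varepsilon})$ vertices that encloses $\gamma$ and hence sees at least as much area as $\gamma$ (and therefore at least as much as any length-$L$ tour). It thus suffices to search over convex $k$-gon tours of length at most $(1+\varepsilon)L$. The key structural fact, already noted above, is that once the cell of the refined decomposition $D'$ containing each vertex $v_i$ is fixed, the set of cells of $D$ crossed by each edge is determined, so the combinatorial type of $V(\gamma')$ is fixed and the area seen is a single fixed function $f(v_1,\dots,v_k)$ of the vertex coordinates alone --- the interiors of the edges contribute nothing new to the seen region (Figure~\ref{fig:floating}).

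I would then enumerate all $O(n^{8k})$ assignments of the $k$ vertices to cells of $D'$. For each assignment, the problem becomes: maximize $f(\vec v)$ subject to cell-membership (linear) constraints, convexity (linear) constraints, and the length budget $\sum_i |v_iv_{i+1}| \le (1+\varepsilon)L$. Setting up the Lagrangian and clearing the square roots appearing in the length terms (squaring $O(k)$ times) yields a polynomial system in $O(k)$ variables of algebraic degree $O(k)$, solvable in $k^{O(k)}$ time by \cite[Section~3.4]{canny1988complexity}; taking the best feasible solution over all placements gives the desired tour. The total running time is $O(n^{8k})\cdot k^{O(k)} = n^{O(1/\sqrt{\varepsilon})}$. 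The main obstacle I anticipate is the verification that $f$ really is one fixed formula throughout each cell of $D'$ --- that is, that local motion of a vertex within its cell, and of an edge within its fixed set of $D$-cells, never changes the seen region in a way not captured by $f$ --- together with the careful bookkeeping needed to keep the algebraic degree of the Lagrangian system at $O(k)$ so that the per-placement solve stays within the $k^{O(k)}$ bound.
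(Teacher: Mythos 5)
Your proposal follows the paper's argument essentially verbatim: reduce the non-strictly-convex case to the anchored algorithm via reflex vertices, apply Dudley's outer approximation to get a convex tour with $k=O(1/\sqrt\varepsilon)$ vertices, enumerate the $O(n^{8k})$ placements into cells of the refined decomposition $D'$, and solve the resulting Lagrangian system in $k^{O(k)}$ time via Canny. (One wording slip: for the BWRP the remaining task is to find a strictly convex tour of length at most $(1+\varepsilon)L$ maximizing the area seen, not a shortest tour meeting an area quota --- but your subsequent optimization is set up correctly for the budgeted version.)
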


\section{Domains that are a Union of Lines }

We consider the QWRP and the BWRP in a domain $P$ that is a connected union (arrangement) of lines; it suffices to truncate the lines within a bounding box that encloses all vertices of the line arrangement, so that $P$ is bounded. Such domains, which are a special case of polygons with holes, have been studied in the context of the watchman route problem~\cite{dumitrescu2014minimum, dumitrescu2014watchman}. In this setting, the QWRP seeks to minimize the length of a route contained within $P$ that visits at least a specified number of (truncated) lines, and the BWRP seeks to maximize the number of lines visited by a route within $P$ of length at most some budget. In this section we do not assume a depot $s$ is specified.

For a set of lines $\mathcal{L}$, let $\mathcal{A}(\mathcal{L})$ denote the arrangement formed by $\mathcal{L}$. We assume that not all of the lines are parallel so that the union is connected. All of a line can be seen from any point incident on it. Let $G(\mathcal{L})$ denote the weighted planar graph with vertex $V(\mathcal{A}(\mathcal{L}))$ of intersections between lines. 
Two vertices in the graph are connected by an edge with Euclidean weight if they share the same edge in $\mathcal{L}$.
Since the watchman is constrained to travel within $\mathcal{A}(\mathcal{L})$ and the only time the route can have turning points not in $V(\mathcal{A}(\mathcal{L}))$ is when it traces out the same edge of $G(\mathcal{L})$ consecutively in opposite directions, turning somewhere interior to that edge. However, then we can shortcut that portion of the route altogether while maintaining visibility coverage, it is easy to see that any optimal route for BWRP or QWRP is polygonal and its vertices is a subset of $V(\mathcal{A}(\mathcal{L}))$.

We first explain our results for the QWRP, then we use them to solve the BWRP.
The following observation is essential to our algorithm: a line intersects a tour $\gamma$ if and only if it intersects the convex hull of $\gamma$. We define the \emph{quota intersecting convex hull} problem as follows: compute a cyclic sequence $(v_1, v_2, \ldots, v_h)$ of vertices $v_i \in V(\mathcal{A}(\mathcal{L}))$ in convex position such that the number of lines intersecting the convex polygon $(v_1, v_2, \ldots, v_h)$ is at least some specified $Q > 0$ and $\sum_{i}^h|\pi(v_i, v_{i+1})|$ is minimized $(v_{h+1} = v_1)$, where $\pi(s,t) = \pi_G(s,t)$ is the shortest path connecting $s$ and $t$ in $G(\mathcal{L})$. We show the relationship between the quota intersecting convex hull problem and the QWRP in an arrangement of lines.

\begin{lemma}
    An optimal solution to the quota intersecting convex hull problem yields an optimal solution to the QWRP in an arrangement of lines.
\end{lemma}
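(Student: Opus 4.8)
The plan is to prove a two‑way, length‑preserving correspondence between feasible solutions of the two problems, so that their optimal values coincide and the tour built from a QICH optimum is QWRP‑optimal. Write $\mathrm{OPT}_Q$ and $\mathrm{OPT}_W$ for the optimal values of the quota intersecting convex hull problem and of the QWRP in $\mathcal{A}(\mathcal{L})$, respectively; I would establish the two inequalities $\mathrm{OPT}_W \le \mathrm{OPT}_Q$ and $\mathrm{OPT}_Q \le \mathrm{OPT}_W$ by explicit constructions. First I record the reformulation of ``seeing'': a line is seen exactly when the route touches it, so the lines seen by a tour $\gamma$ are precisely those meeting $\gamma$; and because $\gamma$ is connected, a line $\ell$ misses $\gamma$ iff $\gamma$ lies in one open halfplane of $\ell$ iff $\mathrm{conv}(\gamma)$ lies in that halfplane iff $\ell$ misses $\mathrm{conv}(\gamma)$. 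This is the stated observation, and it shows that the set of lines seen by $\gamma$ depends only on $\mathrm{conv}(\gamma)$ and equals the set of lines meeting $\mathrm{conv}(\gamma)$.

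For $\mathrm{OPT}_W \le \mathrm{OPT}_Q$, I take an optimal QICH sequence $(v_1,\dots,v_h)$ and form the closed route $\gamma$ by concatenating the graph geodesics $\pi_G(v_1,v_2),\pi_G(v_2,v_3),\dots,\pi_G(v_h,v_1)$. Then $\gamma$ is a connected tour in $P$ of length exactly $\sum_i |\pi_G(v_i,v_{i+1})| = \mathrm{OPT}_Q$, and since $\mathrm{conv}(\gamma)\supseteq\mathrm{conv}(v_1,\dots,v_h)$ at least $Q$ lines meet $\mathrm{conv}(\gamma)$, so by the reformulation $\gamma$ sees at least $Q$ lines and is QWRP‑feasible.

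For the reverse inequality $\mathrm{OPT}_Q \le \mathrm{OPT}_W$, I take an optimal QWRP tour $\gamma^{*}$; by the normalization preceding the lemma it is a closed walk in $G(\mathcal{L})$ through vertices of $V(\mathcal{A}(\mathcal{L}))$. Let $v_1,\dots,v_h$ be the vertices of $\mathrm{conv}(\gamma^{*})$ in convex cyclic order; they lie in $V(\mathcal{A}(\mathcal{L}))$, are in convex position, and satisfy $\mathrm{conv}(v_1,\dots,v_h)=\mathrm{conv}(\gamma^{*})$, which is met by the same $\ge Q$ lines, so $(v_1,\dots,v_h)$ is QICH‑feasible. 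It then remains to bound its cost $\sum_i |\pi_G(v_i,v_{i+1})|$ by $|\gamma^{*}|$: the walk $\gamma^{*}$ visits $v_1,\dots,v_h$ in some cyclic order, and its arcs between consecutive visited hull vertices are walks in $G(\mathcal{L})$, so $|\gamma^{*}|$ is at least the sum of the $\pi_G$‑distances taken in that visiting order; I would then invoke that, for points in convex position, the convex (boundary) order is the cheapest cyclic order under $\pi_G$, giving $\sum_i |\pi_G(v_i,v_{i+1})| \le |\gamma^{*}|$. Combining the two inequalities yields $\mathrm{OPT}_W=\mathrm{OPT}_Q$, so the tour built in the first construction from an optimal QICH sequence is an optimal QWRP route.

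I expect this last step to be the main obstacle. For the Euclidean metric the convex order is the well‑known shortest cyclic order through convex‑position points, but here the relevant metric is the geodesic metric $\pi_G$ of the arrangement, which may strictly exceed Euclidean distance, so the classical comparison does not transfer verbatim. I would argue it by uncrossing: if the visiting order is not convex there are two tour links joining interleaving pairs $v_a,v_b$ and $v_c,v_d$ (so the segments $v_av_b$ and $v_cv_d$ cross), and replacing them by the ``side'' links $v_av_c,v_bv_d$ does not increase total length provided the geodesics $\pi_G(v_a,v_b)$ and $\pi_G(v_c,v_d)$ share a common point $y$ — routing through $y$ shows $|\pi_G(v_a,v_c)|+|\pi_G(v_b,v_d)|\le |\pi_G(v_a,v_b)|+|\pi_G(v_c,v_d)|$. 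The crux is therefore proving that interleaving geodesics in a line arrangement must intersect: this uses the planarity of $\mathcal{A}(\mathcal{L})$ and the interleaving position of the four endpoints — in the representative case where each geodesic runs along a single line, the two supporting lines meet at an arrangement vertex lying on both geodesics, and I would extend this to bent geodesics by a Jordan‑curve/winding argument. Finitely many uncrossings then turn the visiting order into the convex order without increasing length. Degenerate hulls (collinear $v_i$, where the ``polygon'' is a segment) are checked separately and cause no difficulty.
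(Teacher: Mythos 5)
Your proposal follows the same two\mbox{-}direction reduction as the paper. The forward direction (concatenate the graph geodesics between consecutive hull vertices; any line meeting the convex hull meets the resulting tour) is identical, and the backward direction likewise passes to the convex hull of a shorter feasible tour, whose vertices lie in $V(\mathcal{A}(\mathcal{L}))$, are in convex position, and form a feasible sequence for the quota intersecting convex hull problem. Where you diverge is in how much weight you put on the inequality $\sum_i|\pi_G(v_i,v_{i+1})|\le|\gamma^*|$: the paper dispatches it in a parenthetical (``or $\gamma'$ could be shortened while still intersecting $Q$ lines''), implicitly treating the tour as if it visits its hull vertices in convex cyclic order, whereas you correctly note that a self\mbox{-}intersecting closed walk need not do so, so one must show the convex cyclic order is no more expensive than the actual visiting order under the metric $\pi_G$. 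Your proposed fix --- uncrossing via the quadrangle inequality, valid once the two ``diagonal'' geodesics share a point --- is the right shape, but its key sub\mbox{-}claim, that geodesics joining interleaving pairs of convex\mbox{-}position vertices of the arrangement must intersect, is only verified in the case where each geodesic runs along a single line (where it reduces to the crossing of the diagonals of a convex quadrilateral). For bent geodesics you defer to an unspecified Jordan\mbox{-}curve argument; note that a single open arc does not separate the plane, so that step genuinely needs a proof (e.g., closing each geodesic with the chord between its endpoints and tracking which side of the resulting cycle the other two vertices lie on). In short: essentially the paper's argument, with its one nontrivial step made explicit but not fully closed --- a step the paper itself asserts without proof.
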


\begin{proof}
    Suppose $(v_1, v_2, \ldots, v_h)$ is an optimal solution to the quota intersecting convex hull problem of length $L$ intersecting $Q$ lines. We concatenate $\pi(v_1, v_2), \ldots,  \pi(v_{h-1}, v_{h})$ and $\pi(v_h, v_1)$ to form $\gamma$. Every line intersecting the convex polygon $(v_1, v_2, \ldots, v_h)$ must intersect $\gamma$ as well. Thus, $\gamma$ is a route of length $\sum_{i}^h|\pi(v_i, v_{i+1})| = L$ seeing $Q$ lines.

    We claim that there is no solution $\gamma'$ to the QWRP intersecting $Q$ lines that is strictly shorter than $\gamma$. Suppose to the contrary, take the convex hull of $\gamma'$, which has vertices in $V(\mathcal{A}(\mathcal{L}))$ since vertices of $\gamma'$ are in $V(\mathcal{A}(\mathcal{L}))$. The vertices of the convex hull of $\gamma'$ form a cyclic sequence that is feasible for the quota intersecting convex hull problem, and the length is exactly $|\gamma'|$ (or $\gamma'$ could be shortened while still intersecting $Q$ lines), which is strictly smaller than $L$. Thus, $\gamma'$ yields a feasible cyclic sequence intersecting $Q$ lines while the length is shorter than $\gamma$, violating the assumption that $(v_1, v_2, \ldots, v_h)$ is optimal.
\end{proof}
Note that there can be many optimal solutions to the quota intersecting convex hull problem yielding the same tour (Figure~\ref{fig:maximum_intersecting_convex_hull}).

    \begin{figure}[h]
        \centering
        \includegraphics[width=0.35\textwidth]{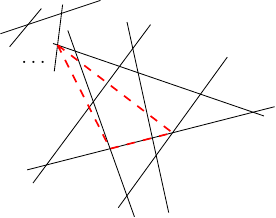}\hspace{2cm}
        \includegraphics[width=0.35\textwidth]{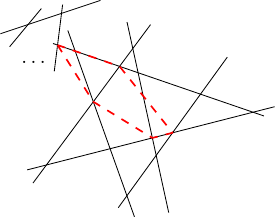}\\
        \vspace{1cm}
        \includegraphics[width=0.35\textwidth]{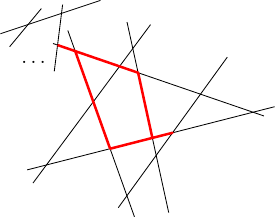}
        \caption{Multiple optimal solutions to the quota intersecting convex hull problem corresponding to the same optimal solution of the QWRP.}
        \label{fig:maximum_intersecting_convex_hull}
    \end{figure}

We give a dynamic programming algorithm to solve the quota intersecting convex hull problem. Fix one vertex to be the lowest vertex, let that vertex be $v_1$. We will examine all possible choices of $v_1$, and find the optimal cyclic sequence with each choice. Let $\{v_2, v_3, \ldots, v_{m-1}\}$ be the list of vertices above $v_1$ sorted by increasing angle with the left horizontal ray passing through $v_1$, breaking ties by increasing distance to $v_1$. Then, set a new element $v_m := v_1$ and append it to the list. Thus, an optimal cyclic sequence $(v_1, v_{i_2}, v_{i_3}, \ldots, v_m)$ has $1 < i_2 < i_3 < \ldots < m$. We can restrict ourselves to ordered pairs $(v_i, v_j)$ of consecutive vertices where $1\le i < j \le m$ and either $i\ne 1$ or $j\ne m$ in the sequence.

For $1\le i < j \le m$ and either $i\ne 1$ or $j\ne m$, denote by $\mathcal{L}_{i,j}$ the lines that intersect the segment $v_iv_j$ (including at the endpoints $v_i, v_j$). Each vertex $v_j$ and a quota value $\overline{Q}$ define a subproblem.  Let $\pi(v_j, \overline{Q})$ be the shortest length of a sequence of vertices in convex position from $(v_1, \ldots, v_j)$ intersecting at least $\overline{Q}$ lines. Starting from $v_1$, we initialize $\pi(v_1,|\mathcal{L}_{1,1}|) = 0$ with the associated sequence $(v_1)$. For $j=2,\ldots,m$ and $\overline{Q} = 1, 2, \ldots, n$, we solve the subproblems $(v_j,\overline{Q})$ by the following Bellman recursion, for all $i < j$ such that the sequence associated with $(v_i, \overline{Q} - |\mathcal{L}_{i,j}\setminus \mathcal{L}_{1,i}|)$ and $v_j$ are in convex position (Figure~\ref{fig:subproblem_lines})
\begin{equation*}
\label{eq:recursion_dp_lines}
    \pi(v_j,\overline{Q}) = \min\limits_{i}\left\{\pi(v_i, \overline{Q} - |\mathcal{L}_{i,j}\setminus \mathcal{L}_{1,i}|) + |\pi(v_i, v_j)| \right\}.
\end{equation*}

    \begin{figure}[h]
        \centering
        \includegraphics[width=0.8\textwidth]{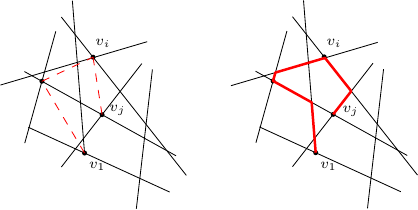}
        \caption{Solving subproblem $(v_j, \overline{Q})$. The sequence of vertices in convex position is drawn on the left with a dashed red chain, and the corresponding part of $\gamma$ is drawn with solid red segments on the right.}
        \label{fig:subproblem_lines}
    \end{figure}
Correctness of the algorithm follows from these two claims:
\begin{itemize}
    \item Given a sequence of vertices in convex position $(v_1, \ldots, v_i, v_j)$, any line intersecting both $\pi(v_i,v_j)$ and the subsequence from $v_1$ to $v_i$, $(v_1, \ldots, v_i)$ must intersect the segment $v_1v_i$ and vice versa due to continuity and convexity. If a sequence $(v_1, \ldots, v_i, v_j)$ is the shortest among all sequences from $v_1$ to $v_j$ intersecting at least $\overline{Q}$ lines, then the subsequence from $v_1$ to $v_{i}$ is the shortest sequence from $v_1$ to $v_i$ intersecting $\overline{Q} - |\mathcal{L}_{i,j}\setminus \mathcal{L}_{1,i}|$.

    \item The sequence $(v_1, \ldots, v_i)$ associated with optimal solution $i$ to the Bellman recursion is such that $(v_1, \ldots, v_i)\cup (v_j)$ are in convex position.
\end{itemize}

\begin{theorem}
\label{thm:lines}The QWRP and the BWRP in an arrangement of lines can be solved in $O(n^7)$ time.
\end{theorem}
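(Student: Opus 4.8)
The plan is to carry out careful bookkeeping of the dynamic program described above, together with the precomputation needed to make each Bellman transition run in constant time, and then to reduce the BWRP to the tabulated output of the QWRP solver. Throughout I treat $n$ as the number of lines, so that the arrangement $\mathcal{A}(\mathcal{L})$ has $N = O(n^2)$ vertices, the graph $G(\mathcal{L})$ has $O(n^2)$ vertices and $O(n^2)$ edges, and the quota $\overline{Q}$ ranges over the $O(n)$ integer values $1, \ldots, n$.

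First I would gather the two pieces of data that the recursion queries. I would run Dijkstra from every vertex of $G(\mathcal{L})$ to obtain all pairwise geodesic lengths $|\pi_G(v_i,v_j)|$; on a graph with $O(n^2)$ vertices and edges this costs $O(n^4 \log n)$, well inside the target. I would also, for every ordered pair $(v_i,v_j)$, record the set $\mathcal{L}_{i,j}$ of lines crossing the segment $v_iv_j$ as an $n$-bit mask, by testing each of the $n$ lines against each of the $O(n^4)$ segments in $O(n^5)$ total time. With these masks in hand, for each fixed lowest vertex $v_1$ and each pair $(i,j)$ I can evaluate the increment $|\mathcal{L}_{i,j}\setminus\mathcal{L}_{1,i}|$ by a single masked difference-and-popcount in $O(n)$ time, so precomputing all these increments costs $O(n^2)\cdot O(n^4)\cdot O(n)=O(n^7)$.

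Next I would bound the dynamic program itself. There are $O(n^2)$ choices of the lowest vertex $v_1$. For each choice there are $O(n^2)$ candidate last vertices $v_j$ and $O(n)$ quota values, hence $O(n^3)$ subproblems; each subproblem minimizes over $O(n^2)$ predecessors $v_i$. Using the precomputed tables, every transition reduces to one lookup of $|\pi_G(v_i,v_j)|$, one lookup of the increment $|\mathcal{L}_{i,j}\setminus\mathcal{L}_{1,i}|$ to index the previous quota, and one convex-position test, which is $O(1)$ if with each subproblem we store the last edge of its associated chain. Thus each fixed $v_1$ costs $O(n^3)\cdot O(n^2)=O(n^5)$, and summing over all $O(n^2)$ choices of $v_1$ gives $O(n^7)$, matching the precomputation; reading off the optimum $\pi(v_m, Q)$ then completes the QWRP.

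The step I expect to be the crux is not the arithmetic but ensuring each transition really is $O(1)$ while preserving correctness: the increment $|\mathcal{L}_{i,j}\setminus\mathcal{L}_{1,i}|$ must be available in constant time, and --- as flagged by the two correctness claims in the main text --- the optimal chain stored for $(v_i,\cdot)$ must be one that can legally be extended by $v_j$ in convex position, which is what justifies testing convexity only against the single stored last edge rather than against an exponential family of candidate chains. Finally, for the BWRP I would exploit that the QWRP dynamic program already tabulates, for every quota $\overline{Q}\in\{1,\dots,n\}$, the minimum tour length $f(\overline{Q})$ intersecting at least $\overline{Q}$ lines; the BWRP answer for a budget $B$ is simply $\max\{\overline{Q} : f(\overline{Q}) \le B\}$, obtained by an $O(n)$ scan of this table. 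Hence both problems are solved within the same $O(n^7)$ bound.
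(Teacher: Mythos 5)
Your proposal is correct and follows essentially the same route as the paper's proof: precompute all pairwise geodesic distances in $G(\mathcal{L})$ and all line-crossing increments $|\mathcal{L}_{i,j}\setminus\mathcal{L}_{1,i}|$ (the paper does this for all vertex triples in $O(n^7)$, you do it per choice of $v_1$, which is the same work), then run the $O(n^3)$-subproblem, $O(n^2)$-transition dynamic program over all $O(n^2)$ choices of the lowest vertex, and finally read the BWRP answer off the tabulated quota-versus-length function. The only differences are cosmetic (Dijkstra versus the paper's cruder $O(n^6)$ all-pairs bound, and a linear scan versus a binary search for the BWRP), neither of which affects the $O(n^7)$ total.
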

\begin{proof}
    Since there are $n$ lines, there are $O(n^2)$ vertices in $V(\mathcal{A}(\mathcal{L}))$. We pre-compute $\pi(v_i, v_j)$ for all pairs $v_i, v_j$ in $O(n^6)$ time and store them for faster access during the algorithm. Also, pre-computing and storing $|\mathcal{L}_{i,j}\setminus\mathcal{L}_{k,i}|$ for all triples of vertices take $O(n^7)$ time.

To solve the quota intersecting convex hull problem, for each of the $O(n^2)$ choices of $v_1$, we sort the vertices above $v_1$ in $O(n^2\log n)$ time. There are $O\left(n^3\right)$ subproblems ($O(n^2)$ vertices above $v_1$ and $n$ values of the quota) once we fix $v_1$. A subproblem can be solved in $O(n^2)$ time, which includes considering all subproblems of lower indices and solving the Bellman recursion for each, which takes constant time, since we have all $|\mathcal{L}_{i,j}\setminus\mathcal{L}_{k,i}|$ and $\pi(v_i, v_j)$ pre-computed. Thus, in $O\left({n^7}\right)$ time, we can obtain a solution to the QWRP.

In terms of the BWRP (with given length budget $B$), we solve the QWRP with the quota $Q = n$. Then, simply perform a binary search over all solved subproblems $(v_m, \overline{Q})$ with $\overline{Q} = 1, 2, \ldots, n$, return the first value of $\overline{Q}$ such that $\pi(v_m, \overline{Q}) \le B, \pi(v_m, \overline{Q} + 1) > B$ and the associated cyclic sequence. Hence, the BWRP in an arrangement of lines can also be solved in $O(n^7)$ time.
\end{proof}
\begin{remark}
    When $Q = n$, the algorithm solves the classic WRP in an arrangement of lines, thus our result improves upon the $O(n^8)$ solution in \cite{dumitrescu2014watchman}. 
\end{remark}
\section{The QWRP and BWRP in a Polygon With Holes}
\subsection{Hardness of approximation}
\begin{theorem}
\label{thm:hardness-approx-qwrp-holes}
    The QWRP in a polygon with holes cannot be approximated, in polynomial time, within a factor of $c\log n$ for some constant $c > 0$, unless P = NP.
\end{theorem}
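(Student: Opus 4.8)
The plan is to establish the logarithmic inapproximability by a gap-preserving reduction from \textsc{Set Cover}, the canonical source of $\Theta(\log n)$ hardness, which is NP-hard to approximate within $c'\log m$ for some constant $c'>0$ (on a universe of size $m$) unless $\mathrm{P}=\mathrm{NP}$. The conceptual starting point is that the QWRP strictly generalizes the WRP: taking the area quota $A=|P|$ forces the tour to see all of $P$, so a QWRP instance with $A=|P|$ is exactly a WRP instance. Hence any polynomial-time $\alpha$-approximation for the QWRP would, restricted to quota $A=|P|$, yield an $\alpha$-approximation for the WRP in a polygon with holes, which is already known to be NP-hard to approximate within $c\log n$~\cite{mitchell2013approximating}; the bound is therefore inherited by the QWRP. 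For a self-contained argument that also pins down the construction and the constant, I would instead exhibit the reduction directly.

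Given a \textsc{Set Cover} instance with universe $U=\{e_1,\dots,e_m\}$ and sets $S_1,\dots,S_k$, I would build a polygon with holes $P$ as follows. A central ``hub'' corridor lets the watchman move cheaply between gadgets. For each set $S_j$ I attach a branch of uniform length $\ell$ terminating at a viewpoint $p_j$; for each element $e_i$ I carve a thin pocket, bounded by holes, whose interior is visible \emph{only} from the viewpoints $p_j$ with $e_i\in S_j$. The quota $A$ is set just below the total area, so that it can be met if and only if every element pocket is seen, i.e.\ if and only if the branches visited by the tour correspond to a collection of sets covering $U$. With the branch lengths dominating the fixed hub-traversal cost, the optimal QWRP length is, up to lower-order terms, proportional to the number of selected sets, so the optimal tour length and the optimal \textsc{Set Cover} size agree up to a controlled factor.

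The key steps, in order, are: (i) specify the hub, branch, and pocket gadgets and verify the ``only-from-$p_j$'' visibility property, using the holes to block every other sightline into a pocket; (ii) fix the quota $A$ so that meeting it is equivalent to covering $U$; (iii) show that any feasible tour can be transformed, without increasing length, into a canonical tour that visits a well-defined subset of branches, establishing the two-way correspondence between tours and covers; and (iv) bound the polygon's complexity by $n=\mathrm{poly}(m,k)=\mathrm{poly}(m)$ on hard instances, so that $\log m=\Theta(\log n)$ and the \textsc{Set Cover} gap of $c'\log m$ transfers to a gap of $c\log n$ for the QWRP.

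The main obstacle I anticipate is the cost-balancing underlying step (iv): the \textsc{Set Cover} hardness is a multiplicative $\log m$ gap in the \emph{number} of selected sets, whereas the tour length carries an unavoidable additive overhead (hub traversal, plus entering and leaving each branch). To preserve the multiplicative logarithmic gap I would scale the branch length $\ell$ to dominate this overhead, while simultaneously keeping all coordinates polynomially bounded so the instance stays of polynomial size. Making these two requirements compatible, and verifying that the visibility-blocking holes do not themselves inflate the vertex count beyond $\mathrm{poly}(m)$, is the delicate part of the construction.
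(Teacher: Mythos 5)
Your opening argument --- that setting the quota $A=|P|$ makes the QWRP coincide with the WRP, so the $c\log n$ inapproximability of the WRP in polygons with holes from \cite{mitchell2013approximating} is inherited directly --- is exactly the paper's proof, and it is complete as stated. The additional self-contained \textsc{Set Cover} construction you sketch is unnecessary for this theorem (and its detail of setting the quota ``just below'' the total area would need care, since the resulting slack could let a tour skip a small element pocket); the special-case reduction already suffices.
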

\begin{proof}
    In~\cite{mitchell2013approximating}, Mitchell proved that the WRP admits no polynomial time approximation algorithm with a factor of $c\log n$ or better for some constant $c > 0$, unless P = NP. Since the WRP is a special case of the QWRP, in which the area quota is the area of whole polygonal domain, a similar lower bound of approximation is implied.
\end{proof}
    \begin{theorem}
    \label{thm:hardness-approx-bwrp-holes}
    The BWRP in a polygon with holes cannot be approximated, in polynomial time, within a factor of $\left(1 - \varepsilon\right)$ for arbitrary $\varepsilon > 0$, unless P = NP.
\end{theorem}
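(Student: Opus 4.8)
The plan is to give a gap-producing reduction from (the gap version of) \textsc{Set Cover}, reusing the style of polygonal gadget that underlies Theorem~\ref{thm:hardness-approx-qwrp-holes}. Recall it is NP-hard to distinguish, for an instance of \textsc{Max $k$-Coverage} with universe $U$ of size $N$ and sets $S_1,\dots,S_m$, between the case in which some $k$ sets cover all of $U$ and the case in which every choice of $k$ sets covers at most a $(1-\delta)$ fraction of $U$, for a fixed constant $\delta>0$ (Feige's hardness of maximum coverage). I would encode such an instance as a polygon with holes $P$ together with a budget $B$, building on the same reduction template used for the logarithmic inapproximability of the QWRP.

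First I would build $N$ congruent \emph{element pockets}, one per element $e_i\in U$, each a deep thin room of equal area $a$, arranged so that pocket $i$ is seen in its entirety exactly when the route touches one of the \emph{set chords} corresponding to a set $S_j$ with $e_i\in S_j$, and is otherwise seen only in negligible measure. The sets would be represented by $m$ corridors guarded by thin necks, branching off a central loop; the budget $B$ would be tuned so that a closed tour can afford to dip into exactly $k$ of these corridors (while traversing the central loop once), but no more. Keeping all corridors and necks thin ensures the \emph{base} area $b$ of the construction is negligible compared with $Na$.

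With this correspondence, the area seen by a budget-$B$ tour equals, up to the negligible base, $a$ times the number of covered elements. Hence a \textsc{yes} instance admits a tour seeing essentially all of $|P|=Na+b$, whereas in a \textsc{no} instance every budget-$B$ tour corresponds to a choice of at most $k$ sets and therefore leaves at least $\delta N$ pockets unseen, seeing area at most $(1-\delta)Na+b\le(1-\varepsilon_0)|P|$ for a constant $\varepsilon_0>0$ depending only on $\delta$. A polynomial-time $(1-\varepsilon)$-approximation with $\varepsilon<\varepsilon_0$ would, on \textsc{yes} instances, output a tour of area exceeding $(1-\varepsilon_0)|P|$ and thereby decide the gap problem, forcing $\mathrm{P}=\mathrm{NP}$. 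Thus no polynomial-time $(1-\varepsilon)$-approximation exists for any $\varepsilon<\varepsilon_0$, and hence for arbitrarily small $\varepsilon>0$, which establishes the theorem.

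The main obstacle is purely geometric: designing the pocket/corridor gadget so that visibility is genuinely \emph{all-or-nothing} per element and per set. Because a watchman accrues visibility continuously along the whole route, I must guarantee that a tour cannot ``cheat'' by glimpsing a constant fraction of a pocket it did not pay to enter, and cannot see the pockets of more than $k$ set-corridors within budget; this is precisely what converts the combinatorial coverage gap into an area gap of the same order. Verifying that $B$ simultaneously (i) suffices for any $k$ chosen corridors in a \textsc{yes} instance and (ii) provably precludes covering more than $k$ corridors in a \textsc{no} instance — i.e., that lengths and areas are calibrated consistently so that the base area never dilutes the gap — is the delicate part, but it follows the established template for watchman-route hardness in polygons with holes.
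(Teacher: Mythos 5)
Your high-level strategy (a gap-preserving reduction from an APX-hard coverage problem, realized by pocket/corridor gadgets in a polygon with holes, followed by the standard gap-to-hardness argument) matches the paper's in spirit, and the concluding logic is fine. The paper, however, reduces from \textsc{Max $k$-Vertex Cover} in \emph{cubic} graphs (Petrank) rather than from gap \textsc{Max $k$-Coverage}, and this choice is not cosmetic: it is precisely what makes the area accounting work, and it is exactly the point your proposal defers as ``the delicate part.''

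The genuine gap is your claim that the corridors can be kept thin enough that the base area is negligible compared with $Na$. An element pocket that must be seen in its entirety from several spatially separated set locations has to be reached by a line-of-sight corridor from each of them, and the portion of such a corridor that is visible from only one of its endpoints cannot be made negligible relative to the payoff region: in the paper's construction the payoff region is the rhombus-shaped intersection of two corridors, of area $b$, while each corridor contributes a privately seen triangle of area $a$ with $b=\Theta(a)$ --- the unavoidable corridor area is of the \emph{same order} as the reward. The paper neutralizes this by using a $3$-regular instance: every feasible choice of $k$ node gadgets sees exactly $3ka$ of private corridor area, a quantity independent of \emph{which} nodes are chosen, so it cancels in the gap computation (together with an explicit bound, at most $\varepsilon b$, on the total area of the $O(k^2)$ corridor crossings, each of which is seen twice). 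In your general \textsc{Max Coverage} reduction the sets have varying sizes, so different budget-feasible tours see different total corridor area, and this uncontrolled term can swamp the signal of $a$ per covered element; you would also need all set-corridor detours to cost the same so that the budget $B$ admits exactly $k$ of them, which is immediate for a regular instance but unaddressed for an arbitrary set system. Until you either bound the corridor areas well below the pocket areas by a geometric device the paper does not have, or restrict to a bounded-degree regular coverage problem as the paper does, the key step identifying ``area seen'' with ``$a$ times the number of covered elements, up to negligible terms'' is unproven, and the reduction does not yet establish the theorem.
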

\begin{proof}
    Our proof is based on a reduction from the \textsc{Max $k$-Vertex Cover} problem in cubic graphs: Given a graph $G = (V,E)$ where every node has degree 3, and a positive integer $k$, find a subset of $k$ nodes such that the number of edges incident with at least a node in the subset is maximized. Petrank proved in \cite{petrank1994hardness} that unless P = NP, there is no polynomial time algorithm that approximates the \textsc{Max $k$-Vertex Cover} problem in cubic graphs to a factor of $1 - \varepsilon$ for arbitrarily small $\varepsilon > 0$.

    We construct an instance of the BWRP in a polygon with holes $P$ as depicted in Figure~\ref{fig:reduction_maxkvertexcover} (not to scale).
    \begin{center}
    \begin{figure}[h]
        \centering
        \includegraphics[width=0.95\textwidth]{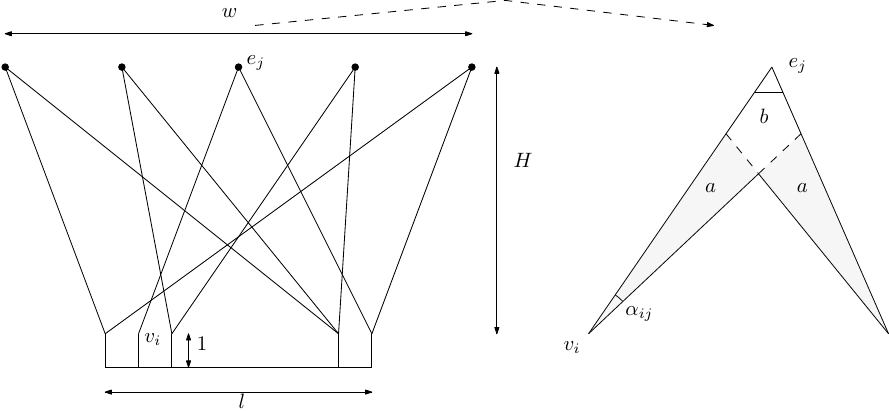}
        \caption{Reduction from the \textsc{Max $k$-Vertex Cover} problem in cubic graphs.}
       \label{fig:reduction_maxkvertexcover}
    \end{figure}
    \end{center}
    
    Black segments represent narrow corridors. From the bottom corridor of length $l \ll 1$, there are $n$ small vertical corridors each of height $1$; these are the node gadgets. Black circles, spaced evenly along a horizontal width $w$ at height $H$, represent edges. If an edge and a node are incident in $G$, we connect them accordingly with a triangular corridor. Each black circle is connected to exactly two node gadgets, and an edge gadget is the rhombus-shaped intersection of two triangular corridors. The bottom corridor, as well as the node gadgets, have infinitesimal width and negligible area. We align the edge gadgets so that they can only be seen from the top of a node gadget.

    For an edge gadget $e_j$ incident with node gadget $v_{i}$, let $\alpha_{ij}$ be the angle of the triangular corridor connecting $v_{i}$ and $e_j$ at (the top of) $v_{i}$. We adjust the width $w$ along which and the height $H$ at which the edge gadgets are placed as well as the angles $\alpha_{ij}$'s so that
    \begin{itemize}
        \item The triangular regions in node-edge corridors that are seen by only one node gadget incident with the edge gadget (shaded gray in Figure~\ref{fig:reduction_maxkvertexcover}) have constant area $a$.
        \item The smallest angle $\alpha_{ij}$ is no smaller than a constant (small) angle $\overline{\alpha}$.
        \item While corridors connecting different edge gadgets to different node gadgets can cross (we place the edge gadgets so that no three corridors intersect at a common area; thus, an intersection is only seen twice), the lowest crossing is at a height greater than the budget~$B$.  Additionally, the highest crossing must happen far away from the edge gadgets, so that total area of all crossings ($O(k^2)$ of them) is minuscule compared to the area of one edge gadget, in particular the total area of all crossing should be no greater than $\varepsilon b$.
    \end{itemize}
    The requirements above can be satisfied by making $H$ and $w$ arbitrarily large.
        
    Now, each edge gadget may have a different area, let $b$ be the minimum area of any edge gadget. Note that by trigonometry $b = \Theta(\sin(\alpha_{ij})a) = \Theta(\sin(\overline{\alpha})a) = \Theta(a)$. To complete our construction, we cut off a small part from the top of each edge gadget with area more than $b$ (however, we have to keep the $a$ areas intact), as illustrated in Figure~\ref{fig:reduction_maxkvertexcover}, to ensure all edge gadgets have the same area. The construction can clearly be done in polynomial time.

     We give the watchman a budget of $B = 2l + 2k$. If the watchman starts at (or near) any edge gadget, he can only see an area of $b + 2a$. If the watchman starts in the middle of any node-edge corridor, he must travel down to the node gadget, the bottom corridor and then other node gadgets to see more of $P$. The watchman does not see any more area of $P$ when traveling down from a node-edge corridor to the node gadget; as a result, we can impose that the watchman must start somewhere in the region of the bottom corridor and the node gadgets.

     If an optimal solution of the \textsc{Max $k$-Vertex Cover} instance covers $m$ edges, an optimal solution of the BWRP instance travels the bottom corridor, selectively goes up some node gadgets to see an area of $3ka + mb$, minus the area of the crossings between corridors. Suppose we have a polynomial-time algorithm that approximates the BWRP to a factor of $(1 - \varepsilon)$, run that algorithm on $P$ and let $m'$ be the number of edge gadgets the approximate solution sees. Also, let $A_1$ be the total area of crossings of corridors visible to node gadgets selected by the approximate solution, and $A_2$ be that by the optimal solution, clearly $A_1 - (1 - \varepsilon)A_2 \ge~-\varepsilon b$. Hence, 
     \begin{align*}
         3ka + m'b - A_1 &\ge (1 - \varepsilon)(3ka + mb - A_2)\\
         \Leftrightarrow 3ka + m'b &\ge (1 - \varepsilon)(3ka + mb) - \varepsilon b\\
         \Leftrightarrow 3\varepsilon k a + m'b &\ge (1-\varepsilon)mb - \varepsilon b,
    \end{align*}
    moreover, since $m \ge m' \ge k \ge 1$ and $b = \Theta(a)$
    \begin{align*}
          O(1)\varepsilon mb + m'b &\ge \left(1 - 2\varepsilon\right)mb\\
         \Leftrightarrow m' &\ge (1 - O(1)\varepsilon)m.
     \end{align*}
    Thus, the nodes selected by the aforementioned approximate solution to the BWRP instance is a $(1 - \varepsilon')$-approximation, for some $\varepsilon' > 0$ arbitrarily small, to the \textsc{Max $k$-Vertex Cover} instance. This leads to a contradiction, and our proof of APX-hardness of the BWRP in a polygon with holes is complete.
\end{proof}

    \subsection{Approximation algorithm for the BWRP in a polygon with holes}
    We decompose $P$ into small convex cells and obtain the set of candidates $S_{\delta, B}$ as in the case with the BWRP in a simple polygon.

    \begin{theorem}
        There exists a route $\gamma'$ whose vertices are a subset of $S_{\delta, B}$ such that $V(\gamma) \subseteq V(\gamma')$ and $|\gamma'| \le (1+\varepsilon)B$.
    \end{theorem}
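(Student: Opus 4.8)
The plan is to reprise the snapping construction behind Lemma~\ref{lem:snapping}, but to replace the global relative-convexity argument (which is unavailable with holes) by a purely local, cell-by-cell visibility transfer. Let $\gamma$ be an optimal BWRP route, so $|\gamma| \le B$, and recall that $S_{\delta,B}$ consists of the vertices of the convex cells of the overlay, with $\delta = \Theta(\varepsilon B/n)$. First I would triangulate $P$ so that every vertex of $P$ (in particular every hole corner, i.e.\ reflex vertex) is a cell vertex, hence a candidate. I would then build $\gamma'$ by snapping each turn of $\gamma$ to a vertex of the cell containing it and replacing the portion of $\gamma$ inside each such turn-cell by a traversal of that cell's boundary, while leaving the long edges of $\gamma$ as straight segments between candidate points. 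The length accounting then mirrors Lemma~\ref{lem:snapping} and the simple-polygon BWRP: each edge lengthens by at most $2\sqrt 2\,\delta$ and each traversed cell perimeter adds at most $4\delta$, so $|\gamma'|\le |\gamma| + O(\delta n)\le (1+\varepsilon)B$ for the stated $\delta$.

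For the visibility containment I would use the following hole-robust substitute for relative convexity. Since each cell $\sigma$ is convex and hole-free, if $p\in\sigma$ sees $x$ then the point $q$ where the segment $px$ first leaves $\sigma$ lies on $\partial\sigma$ and still sees $x$ (the subsegment $qx\subseteq px\subseteq P$); and if $x\in\sigma$, every point of $\partial\sigma$ sees $x$ by convexity. Hence, to capture the visibility of a viewpoint $p\in\gamma$ it suffices that $\gamma'$ contain the point of $\partial\sigma$ on the sightline $px$. For viewpoints inside a turn-cell this is immediate, since $\gamma'$ traverses the entire boundary of that cell, so those sightlines are reproduced verbatim.

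The main obstacle is the visibility of the interiors of the long edges, and this is exactly where holes bite. In a simple polygon the enclosing tour captures edge-interior visibility for free, because one may extend any sightline \emph{backward} from the viewpoint and it is guaranteed to reach $\gamma'$ while staying inside $P$; with holes this backward extension can instead terminate on the boundary of a hole lying inside $\gamma'$, so the argument of Lemma~\ref{lem:snapping} fails outright. Concretely, an interior point of an edge can see a region only by \emph{grazing} a reflex (hole) corner $c$, and the slightly displaced snapped edge may fall on the wrong side of the shadow of $c$ and lose that region (this is precisely the local-movement phenomenon of Figure~\ref{fig:floating}, now working against us). I expect this grazing issue to be the crux of the proof.

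One natural way to resolve it is to make $\gamma'$ reproduce every grazing that $\gamma$ uses: whenever an edge of $\gamma$ grazes a hole corner $c$, I would route $\gamma'$ through $c$ itself. Because $c$ is a vertex of $P$ it is a candidate point, and the detour to and from $c$ is local, of length $O(\delta)$. Charging one detour to each of the $O(n)$ reflex corners keeps the total added length within $O(\delta n)\subseteq\varepsilon B$, while ensuring that for every viewpoint $p$ on $\gamma$ the sightline $px$ is transferred, cell by cell along the convex cells it crosses, to a point of $\gamma'$ rather than escaping to a hole boundary. Making this charging rigorous---bounding the number and total length of the grazing detours, and verifying that passing $\gamma'$ through the grazed corners never destroys previously captured visibility---is the delicate part; the remaining steps are routine given the cell-by-cell transfer and the length bookkeeping inherited from the simple-polygon case.
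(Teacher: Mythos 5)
There are genuine gaps in your construction, and they sit exactly where the paper does something different. First, you keep the long edges of $\gamma$ as \emph{straight segments} between the snapped candidate endpoints. With holes this already fails: moving an endpoint by up to $\delta\sqrt2$ can push the straight segment \emph{into} a hole, so the new edge need not lie in $P$ at all. The paper's proof replaces each edge $e$ by a \emph{geodesic} path between vertices of the two endpoint cells, obtained by sliding an elastic string from $e$ \emph{outward} (toward the exterior of $\gamma$) until it snaps taut onto candidate vertices without ever crossing a hole (Figure~\ref{fig:snapping_in_with_holes}); this keeps $e'$ inside $P$ with $|e'|\le|e|+2\sqrt2\,\delta$ and, crucially, guarantees that $\gamma'$ encloses $\gamma$ with \emph{no hole in the annulus between them}. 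That enclosure is what replaces relative convexity: for any $x$ seen from $p\in\gamma$, either $x$ lies outside $P_{\gamma'}$ and the sightline $px$ crosses $\gamma'$ by the Jordan curve theorem, or $x$ lies inside $P_{\gamma'}$ and the backward extension of $px$ reaches $\gamma'$ because the region between the two curves is hole-free. Your cell-by-cell transfer plus grazing detours tries to substitute for this, but the fix does not work as stated: when a sightline $px$ from the interior of an edge grazes a hole corner $c$, the corner $c$ lies somewhere \emph{along the sightline}, not necessarily near the edge, so the detour of $\gamma'$ to $c$ is not $O(\delta)$ and cannot be charged locally; moreover a single corner can be grazed by many edges, so the ``one detour per reflex corner'' count is also not justified.

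Second, your length bookkeeping uses $\delta=\Theta(\varepsilon B/n)$ and a total overhead of $O(\delta n)$, which implicitly assumes $\gamma$ has $O(n)$ vertices. In a polygon with holes an optimal (budgeted) watchman route can have $\Theta(n^2)$ vertices, which is why the paper takes $\delta=O\bigl(\varepsilon B/n^2\bigr)$; with your choice of $\delta$ the added perimeter and snapping costs are $O(\delta n^2)=O(\varepsilon B\, n)$, far exceeding $\varepsilon B$. Both issues are repairable by adopting the paper's outward geodesic snapping and the smaller $\delta$, but as written the proposal neither produces a route contained in $P$ nor establishes $V(\gamma)\subseteq V(\gamma')$.
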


    \begin{proof}
        Consider an edge $e$ of $\gamma$ whose endpoints lie in cells $\sigma_i$ and $\sigma_j$. We append $\partial\sigma_i$ and $\partial\sigma_j$ to $\gamma$. If the endpoints of $e$ are not vertices of $\sigma_i$ and $\sigma_j$, we replace $e$ with a set of edges whose endpoints are candidate points. The procedure can be described with a physical analog: we slide an elastic string between the two intersection points of $e$ with $\sigma_i$ and $\sigma_j$ towards the exterior of $\gamma$ until the two endpoints coincide with vertices of $\sigma_i$ and $\sigma_j$, the string is pulled taut and never passes through a hole; see Figure~\ref{fig:snapping_in_with_holes}. The result is a geodesic path $e'$ between two vertices of $\sigma_i$ and $\sigma_j$ that is no longer than $|e| + 2\sqrt{2}\delta$.

        Repeating the process for every edge of $\gamma$, we obtain $\gamma'$. If a point $x$ seen by $\gamma$ is inside of $P_{\gamma'}$, the extended line of vision between $x$ and $\gamma$ must intersect with $\gamma'$ since there is no hole between $\gamma$ and $\gamma'$. If $x$ is outside of $P_{\gamma'}$, then the line of vision between $x$ and $\gamma$ must intersect $\gamma'$ due to the Jordan Curve Theorem. Thus, $\gamma'$ sees everything that $\gamma$ sees, moreover $\gamma'$ passes through $s$, a vertex in the decomposition. Since $\gamma$ has $O(n^2)$ vertices \cite{mitchell2013approximating}, for an appropriate choice of $\delta = O\left(\frac{\varepsilon B}{n^2}\right)$ we have $|\gamma'| \le (1+\varepsilon)B$.
    \end{proof}
   
    \begin{figure}[h]
        \centering
        \includegraphics[width=0.7\textwidth]{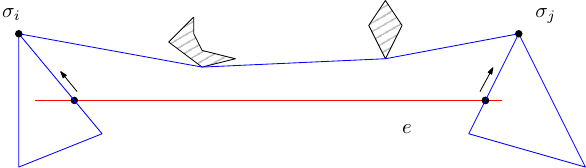}
        \caption{Replacing each edge (red) with the perimeters of the two cells containing its endpoints and a geodesic path of the same homotopy type (blue).}
        \label{fig:snapping_in_with_holes}
    \end{figure}

    We apply a known result for the \textsc{Submodular Orienteering} problem~\cite{chekuri2005recursive}: Given a weighted directed graph $G$, two nodes $s$ and $t$ (which need not be distinct), a budget $B > 0$, and a monotone submodular reward function defined on the nodes, find an $s$-$t$ walk that maximizes the reward, under the constraint that the length of the walk is no greater than~$B$.
    
   Let $G_1$ be the visibility graph on the candidates set with Euclidean edge weights. Let $G_2$ be the line graph of $G_1$: nodes of $G_2$ correspond to edges of $G_1$, and two nodes in $G_2$ are adjacent if their respective edges in $G_1$ are incident. 
   The weight of an edge of $G_2$ is the sum of the weights of the two edges in $G_1$ corresponding to its endpoints, divided by two, thus a closed walk of length $B$ in $G_1$ corresponds to a closed walk of length $B$ in $G_2$ and vice versa.  We apply the approximation algorithm from~\cite{chekuri2005recursive} on $G_2$ to compute a closed walk from any node in $G_2$ corresponding to an edge incident with $s$, with the area of visibility as the reward function and budget $(1+\varepsilon)B$. The reason for using the line graph $G_2$ is that, in the \textsc{Submodular Orienteering} problem, rewards are associated with nodes, while in the context of the BWRP, rewards are accumulated when traversing edges of the visibility graph $G_1$. We obtain the following:

\old{reminder: SoCG Reviewer 2:    line 47:  log(OPT) does not make sense in this form, as it is not
scale-invariant.    The approximation factor doesn't improve when you
scale the input by, say, 0.001.}
    
    \begin{theorem}
        Given a polygon $P$ with holes with $n$ vertices, let $\beta \ge 2$ be any constant of choice and $OPT$ be the maximum area that a route of length $B$ can see. The BWRP has a dual approximation algorithm that computes a tour of length at most $(1+\varepsilon)B$ that sees an area of at least $\Omega\left(\frac{OPT\log \beta}{\log n}\right)$, with running time  $\left(\frac{n}{\varepsilon}\log B\right)^{O\left(\beta\log \frac{n}{\varepsilon}/\log \beta\right)}$.
    \end{theorem}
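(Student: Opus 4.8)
The plan is to reduce the continuous BWRP in a domain with holes to a discrete \textsc{Submodular Orienteering} instance and then invoke a parametrized version of the recursive greedy algorithm of~\cite{chekuri2005recursive}. First I would apply the discretization result just established: for an area-maximizing route $\gamma$ of length $B$ (seeing area $OPT$), there is a route $\gamma'$ with vertices in $S_{\delta,B}$, of length at most $(1+\varepsilon)B$, with $V(\gamma)\subseteq V(\gamma')$. Hence it suffices to search for a good closed walk whose turn points lie in $S_{\delta,B}$. I would form $G_1$, the visibility graph on $S_{\delta,B}$ with Euclidean edge weights, and pass to its line graph $G_2$ exactly as in the construction above, so that the area accrued while \emph{traversing} segments becomes a reward on the \emph{nodes} of $G_2$, while a closed walk of a given length in $G_1$ corresponds to one of the same length in $G_2$ and vice versa.

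The key structural point to check is that the reward is monotone submodular, as required by~\cite{chekuri2005recursive}. Defining, on subsets $S$ of nodes of $G_2$ (equivalently, subsets of segments of $G_1$), the function $f(S)=\bigl|\bigcup_{e\in S}V(e)\bigr|$, I note that $f$ is a \emph{coverage} function---the area of a union of the visibility polygons of the chosen segments---and coverage functions are monotone and submodular; monotonicity and the diminishing-returns property are immediate from set inclusion. I would also observe that $f$ admits a polynomial-time value oracle, since the area of a union of visibility polygons is computable by overlaying their arrangement. With these properties in hand, I would run the algorithm of~\cite{chekuri2005recursive} on $G_2$, with start and end node ranging over the (constantly many per vertex) nodes representing edges of $G_1$ incident to the depot $s$, and budget $(1+\varepsilon)B$, keeping the best closed walk.

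Finally I would instantiate the branching/recursion-depth tradeoff. For a parameter $\beta\ge 2$, recursing to depth $\log_\beta N$, where $N=|V(G_2)|=O(|S_{\delta,B}|^2)=\mathrm{poly}(n/\varepsilon)$, the recursive greedy returns a walk of reward at least a $\Omega(1/\log_\beta N)=\Omega(\log\beta/\log(n/\varepsilon))$ fraction of the optimum, in time $(N\log B)^{O(\beta\log_\beta N)}=\bigl((n/\varepsilon)\log B\bigr)^{O(\beta\log(n/\varepsilon)/\log\beta)}$. Translating the resulting $G_2$-walk back to a closed tour in $G_1$ yields a tour of length at most $(1+\varepsilon)B$ seeing area $\Omega\bigl(OPT\log\beta/\log n\bigr)$ (using $\log(n/\varepsilon)=O(\log n)$ for polynomially bounded $1/\varepsilon$), as claimed. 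The hard part is not the geometry but the faithful bookkeeping of the reduction: guaranteeing that the line-graph transformation turns the edge-accrued visibility area into a monotone submodular \emph{node} reward without distorting lengths, and correctly importing the approximation-versus-running-time tradeoff of the parametrized recursive greedy so that the $\beta$-dependence lands in both the $\Omega(\log\beta/\log n)$ coverage guarantee and the $O(\beta\log_\beta(\cdot))$ exponent of the running time.
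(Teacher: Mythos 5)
Your proposal matches the paper's argument essentially step for step: the same discretization to $S_{\delta,B}$, the same passage to the line graph $G_2$ so that edge-accrued visibility becomes a node reward while walk lengths are preserved, and the same invocation of the parametrized recursive greedy of Chekuri--P\'al with depth/branching parameter $\beta$, yielding reward $\Omega\left(\frac{OPT\log\beta}{\log n}\right)$ in time $\left(\frac{n}{\varepsilon}\log B\right)^{O\left(\beta\log\frac{n}{\varepsilon}/\log\beta\right)}$. The only difference is that you explicitly verify monotone submodularity of the coverage reward and the value oracle, which the paper leaves implicit; this is a welcome addition but not a different route.
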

    \begin{proof}
        Let $k$ be the number of edges in the walk $\gamma'$ returned by the recursive greedy algorithm in \cite{chekuri2005recursive}. Then $|V(\gamma')| = \Omega\left(\frac{OPT\log \beta}{\ceil{1 + \log k}}\right)$. Since $G_2$ has $O(n^4)$ nodes, it follows that $k = O(n^4)$ and $|V(\gamma')| = \Omega\left(\frac{OPT\log \beta}{\log n}\right)$.
    \end{proof}


\section{Optimal Visibility-based Search for a Randomly Distributed Target}
Our results can be applied to solve two problems of searching a randomly distributed static target in a simple polygon $P$: Given a  prior distribution of the target's location in $P$, (1) compute a route that achieves a given detection probability within the minimum amount of time, where the target is detected if the watchman can see it; and (2) (dual to (1)) for a given time budget $T$, compute a search route maximizing the probability of detecting the target by time $T$. Denote by $\mu(.)$ the probability measure on all subsets of $P$; $\mu(P_1)$ is the probability measure of $P_1\subseteq P$, i.e. the probability that the target is in $P_1$, then
\begin{itemize}
    \item $0 \le \mu(.) \le 1, \mu(\varnothing) = 0, \mu(P) = 1$,
    \item $\mu(P_1\cup P_2) = \mu(P_1) + \mu(P_2)$ if $P_1\cap P_2 = \varnothing$.
\end{itemize}
We assume that we have access to $\mu(.)$ via an oracle: Given a triangular region in $P$, the oracle returns its probability measure in $O(1)$ time. Thus, for a point or a segment, the probability measure of its visibility region can be computed in $O(n)$ time. Furthermore, if the watchman has constant speed, a time constraint/objective is equivalent to that of length. An optimal search route for each problem can be computed using the algorithms given with probability measure instead of area. 

\bibliographystyle{plainurl}
\bibliography{refs.bib}

\end{document}